\documentclass[11pt]{article}

\usepackage[a4paper,margin=1in]{geometry}
\usepackage{amsmath,amssymb,amsfonts,amsthm,mathtools,bm}
\usepackage{microtype}
\usepackage{enumitem}
\usepackage{booktabs}
\usepackage{array}
\usepackage[numbers,sort&compress]{natbib}
\usepackage[colorlinks=true,linkcolor=blue,citecolor=blue,urlcolor=blue]{hyperref}
\usepackage{xcolor}
\usepackage{tikz}
\usetikzlibrary{arrows.meta,positioning,calc}
\numberwithin{equation}{section}

\newtheorem{theorem}{Theorem}[section]
\newtheorem{proposition}[theorem]{Proposition}

\newtheorem{corollary}[theorem]{Corollary}
\theoremstyle{definition}

\newtheorem{example}[theorem]{Example}
\newtheorem{remark}[theorem]{Remark}

\newcommand{\C}{\mathbb{C}}
\newcommand{\K}{\mathbb{K}}

\title{\bf Multiparameter Quantum Affine Spaces and the Scalene Yang--Baxter Equation}

\author{Pramod Padmanabhan$^{a}$, Somnath Maity$^{a}$, Vladimir Korepin$^{b}$\\[1mm]
\small $^{a}$Department of Physics, School of Basic Sciences, Indian Institute of Technology,\\[-1mm]
\small Bhubaneswar, 752050, India\\
\small $^{b}$C. N. Yang Institute for Theoretical Physics, Stony Brook University,\\[-1mm]
\small New York 11794, USA\\[1mm]
\small \texttt{pramod23phys@gmail.com}, \texttt{somnathmaity126@gmail.com},\\[-1mm]
\small \texttt{vladimir.korepin@stonybrook.edu}}
\date{\today}

\begin{document}
\maketitle

\begin{abstract}
We construct representation-independent families of solutions of the
non-braided scalene Yang--Baxter equation from quadratic
noncommutative algebras. Beginning with two anticommuting generators,
we obtain a continuous deformation in terms of quantum-plane algebras
and extend the construction to an arbitrary number of generators. In
the latter case the scalene Yang--Baxter relation fixes the
multiparameter exchange matrix to an exact multiplicative form,
thereby selecting a distinguished subclass of multiparameter quantum
affine spaces. We construct finite-dimensional realizations using
singular matrices and finite Heisenberg--Weyl operators, as well as
infinite-dimensional realizations in terms of bilateral weighted shifts
and multiplicative-shift operators. The explicit realizations are
generically \emph{purely scalene}: although the ordered triple satisfies the scalene Yang--Baxter equation, its individual constituent operators do not satisfy the ordinary non-braided Yang--Baxter equation. These results provide a representation-independent algebraic framework for constructing Yang--Baxter-irreducible scalene triples and a starting point for investigating their possible applications to quantum integrability.
\end{abstract}

\tableofcontents

\section{Introduction}
\label{sec:introduction}

The Yang--Baxter equation (YBE) is one of the fundamental algebraic
consistency relations underlying quantum integrability, exactly solvable lattice models, braid group representations and quantum groups
\cite{Baxter1982,KorepinBogoliubovIzergin1993,Faddeev1996}.
In its non-braided operator form it reads
\begin{equation}
R_{12}R_{13}R_{23} = R_{23}R_{13}R_{12},
\label{eq:YBE}
\end{equation}
where the same operator, or members of the same operator family, appears on the three pairs of tensor factors.  The familiar RTT/FRT construction uses solutions of the Yang--Baxter equation to define quadratic exchange algebras and, under suitable hypotheses, associated bialgebraic and quantum-group structures
\cite{FRT1989,Isaev2022}.

A broader algebraic problem is obtained by allowing three independent
operators on the three pairs of tensor factors,
\begin{equation}
A_{12}B_{13}C_{23} = C_{23}B_{13}A_{12}.
\label{eq:scalene}
\end{equation}
Following the terminology used in recent work
\cite{PadmanabhanMaityKorepin2026,konstantinou2026scalene}, we refer to
\eqref{eq:scalene} as the \emph{non-braided scalene Yang--Baxter
equation}.  The terminology emphasizes that the three operators
$A$, $B$ and $C$ are distinct and form an ordered triple associated
with the tensor pairs $(12)$, $(13)$ and $(23)$, respectively.
Hietarinta and Viallet studied two-state five-, six- and eight-vertex
solutions of precisely this form, with particular attention to gauge,
inversion, color and spectral parametrizations
\cite{HietarintaViallet2022}.  Related multi-operator consistency
relations also arise in the theory of Yang--Baxter systems and
entwining structures
\cite{Hlavaty1997,BrzezinskiNichita2005,
BerceanuNichitaPopescu2013}.

Our interest here is more restrictive than merely finding solutions of
the scalene equation.  We seek what we shall call \emph{purely scalene}
solutions.  By this we mean scalene triples for which, at generic values
of the parameters, none of the individual constituents $A$, $B$ or $C$
satisfies the ordinary non-braided Yang--Baxter equation.  In this
sense the triple is \emph{Yang--Baxter irreducible}: its consistency
arises from the mixed relation \eqref{eq:scalene}, rather than from
placing ordinary constant Yang--Baxter operators separately on the
three tensor pairs.  Special lower-dimensional parameter subspaces on
which one or more of the individual operators happens to satisfy the
ordinary YBE do not alter this generic distinction.

The search for such solutions is motivated in part by the possibility
that the scalene equation may provide algebraic structures relevant to
integrability beyond the usual single-$R$-matrix framework.  In our
recent work we found examples in which individual members of a scalene
triple fail the ordinary YBE, while the mixed relation nevertheless
leads to cross-commuting transfer matrices and nontrivial hidden
symmetries \cite{PadmanabhanMaityKorepin2026}.  Cross-commutativity
alone, however, does not imply the existence of a conventional
self-commuting transfer-matrix family.  We therefore distinguish
throughout between the construction of purely scalene solutions and the
further problem of extracting integrable lattice models from them.

The approach of the present paper is deliberately algebraic.  We seek
representation-independent solutions of \eqref{eq:scalene} generated
by quadratic noncommutative algebras.  This continues a line of work in
which algebraic ans\"atze, and in particular Clifford-algebra
constructions, were used to obtain solutions of the Yang--Baxter,
tetrahedron and higher simplex equations
\cite{PadmanabhanKorepin2024Clifford,
PadmanabhanSinghKorepin2025CliffordSolver}.
Related algebraic methods were also used to reproduce and organize the
invertible constant $4\times4$ solutions in Hietarinta's classification
\cite{MaitySinghPadmanabhanKorepin2024}.  Here the objective is instead
to identify an algebraic mechanism intrinsic to the scalene relation.

We begin with three independent associative algebras, each containing
two anticommuting generators.  No relations on the squares of these
generators are required, so the construction is representation
independent and is not restricted to Clifford algebras.  Replacing
anticommutation by $q$-commutation leads naturally to quantum-plane,
or Manin-plane, algebras
\cite{Manin1988,Manin1987Koszul,Manin2018QuantumGroups}.  The scalene equation fixes the relative
orientation of the three deformation parameters.

The construction extends to an arbitrary number $N$ of generators.
In this case the three local algebras are multiparameter quantum affine
spaces.  The scalene Yang--Baxter equation imposes a strong
compatibility condition on their exchange parameters and selects a
distinguished exact subclass of the general multiparameter quantum
affine spaces
\cite{GoodearlLetzter2008,MukherjeeBera2024}.  We then construct
finite-dimensional representations, including singular realizations
and finite Heisenberg--Weyl representations at roots of unity, as well
as infinite-dimensional realizations in terms of bilateral weighted
shifts and multiplicative-shift operators.  The explicit
representations considered in the paper are generically purely
scalene.

The paper is organized as follows.
Section~\ref{sec:twogenerator} develops the representation-independent
two-generator anticommuting construction and gives its Pauli
realization.  Section~\ref{sec:qdeform} introduces the quantum-plane
deformation.  Section~\ref{sec:Ngenerator} extends the construction to
$N$ generators and classifies the allowed multiparameter exchange
relations.  Section~\ref{sec:finite} studies finite-dimensional
representations, including singular and finite Weyl realizations.
Section~\ref{sec:infinite} develops the infinite-dimensional
weighted-shift and function-space representations.
Section~\ref{sec:conclusions} summarizes the purely scalene character
of these solutions, their cross-commutativity structure, and directions
for constructing integrable models.  Additional calculations are
collected in the appendices.

\section{Two-generator algebras : Anticommuting case}
\label{sec:twogenerator}
Consider three associative and noncommutative algebras denoted $$\mathcal S_\alpha~~;~~\alpha\in\{I,II,III \},$$  over a field $\C$ of complex numbers\footnote{We can replace this with any field $\mathbb{F}$, of characteristic different from two.}. Each of these algebras is generated by two elements. They can be collectively written as
\begin{equation}
\left\{X_\alpha^{(1)},X_\alpha^{(2)}\right\}\in\mathcal S_\alpha,\qquad \alpha\in\{I,II,III\}.
\end{equation}
Within each algebra they satisfy the relations
\begin{equation}
X_\alpha^{(1)}X_\alpha^{(2)} = -X_\alpha^{(2)}X_\alpha^{(1)},\qquad \alpha\in\{I,II,III\}.
\label{eq:anti}
\end{equation}
There are no relations imposed between the generators from two different algebras $\mathcal{S}_\alpha$'s.

When considering these operators acting on different sites of the one-dimensional lattice that we will work on, the notation for these generators has to be augmented with site indices as well. This will be denoted $$ X_{j;\alpha}^{(\mu)}. $$
Here the indices $\alpha$ and $\mu$ specify the algebra and the generator number, respectively and the index $j$ keeps track of the lattice sites. The generators of the same or different algebras acting on different sites commute with each other. This is consistent with the tensor product nature of the full Hilbert space. 

The range of values the indices $\alpha$ and $j$ assume will remain the same throughout this paper. On the other hand, the range of values index $\mu$ takes will be allowed to increase and will eventually be $N$. That is, we allow the number of generators to increase while keeping the number of associative algebras and the number of lattice sites to be the same.

With the above notation, define the following triple of operators
\begin{subequations}
\label{eq:twogenerator}
\begin{align}
A_{12}&=a~X_{1;I}^{(1)}X_{2;II}^{(1)}+b~X_{1;I}^{(2)}X_{2;II}^{(2)},\label{eq:alpha2}\\
B_{13}&=c~X_{1;I}^{(1)}X_{3;III}^{(1)}+d~X_{1;I}^{(2)}X_{3;III}^{(2)},\label{eq:beta2}\\
C_{23}&=e~X_{2;II}^{(1)}X_{3;III}^{(1)}+f~X_{2;II}^{(2)}X_{3;III}^{(2)},\label{eq:gamma2}
\end{align}
\end{subequations}
where $a,b,c,d,e,f\in\C$ are arbitrary scalars. They can also be viewed as arbitrary spectral parameter-dependent functions. These ans\"atze are a generalization of the Clifford solutions of the standard Yang-Baxter equation introduced in \cite{PadmanabhanKorepin2024Clifford}.

\begin{theorem}[Two-generator anticommuting construction]
\label{thm:anticomm}
The operators \eqref{eq:twogenerator} obey the scalene Yang--Baxter relation
\begin{equation}
A_{12}B_{13}C_{23}
=
C_{23}B_{13}A_{12}
\label{eq:SYBabc}
\end{equation}
for arbitrary coefficients $a,b,c,d,e,f$.
\end{theorem}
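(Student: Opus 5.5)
The plan is a direct term-by-term comparison. Expanding both sides with the definitions \eqref{eq:alpha2}--\eqref{eq:gamma2}, each side becomes a sum of $2^3=8$ monomials labelled by a triple $(\mu,\nu,\rho)\in\{1,2\}^3$. Here $\mu$ is the generator chosen in $A_{12}$, $\nu$ the one in $B_{13}$ and $\rho$ the one in $C_{23}$. Write $\kappa_A^{(1)}=a$, $\kappa_A^{(2)}=b$, and similarly $\kappa_B^{(\nu)}\in\{c,d\}$ and $\kappa_C^{(\rho)}\in\{e,f\}$. Both sides then carry the same scalar coefficient $\kappa_A^{(\mu)}\kappa_B^{(\nu)}\kappa_C^{(\rho)}$ for a given label, since scalars commute with everything.

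It therefore suffices to show that the operator parts agree for every fixed label. First I use that generators on different sites commute to sort each monomial site by site. The $(\mu,\nu,\rho)$ term of the left-hand side is
\[
\bigl(X_{1;I}^{(\mu)}X_{1;I}^{(\nu)}\bigr)\bigl(X_{2;II}^{(\mu)}X_{2;II}^{(\rho)}\bigr)\bigl(X_{3;III}^{(\nu)}X_{3;III}^{(\rho)}\bigr).
\]
The corresponding term of the right-hand side $C_{23}B_{13}A_{12}$ is
\[
\bigl(X_{1;I}^{(\nu)}X_{1;I}^{(\mu)}\bigr)\bigl(X_{2;II}^{(\rho)}X_{2;II}^{(\mu)}\bigr)\bigl(X_{3;III}^{(\rho)}X_{3;III}^{(\nu)}\bigr).
\]
The key observation is that at each site both factors come from the same algebra. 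Site $1$ only sees $\mathcal S_I$, site $2$ only $\mathcal S_{II}$, and site $3$ only $\mathcal S_{III}$. Hence relation \eqref{eq:anti} applies to each pair. Reversing the order of a pair costs a factor $+1$ if the two indices coincide and $-1$ if they differ. No relation on squares is needed, because equal indices simply give the identical product.

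The right-hand term thus equals the left-hand term times $(-1)^{[\mu\neq\nu]+[\mu\neq\rho]+[\nu\neq\rho]}$. The only real point, and the step I would check most carefully, is that this exponent is always even. With only two possible values, a triple $(\mu,\nu,\rho)$ is either constant, giving $0$ unequal pairs, or has exactly two equal entries and one different, giving exactly $2$ unequal pairs. It can never have three mutually distinct entries, so the sign is always $+1$. All eight monomials therefore match with identical coefficients, and the relation holds for arbitrary $a,b,c,d,e,f$, proving Theorem~\ref{thm:anticomm}.
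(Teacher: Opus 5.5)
Your proof is correct and follows the paper's own argument, including the direct expansion in its appendix. You expand both sides into the eight $(\mu,\nu,\rho)$ monomials, group each monomial site by site, note that $C_{23}B_{13}A_{12}$ reverses the local order at every site, and count that the number of sites with unequal indices is always $0$ or $2$, so the anticommutation signs cancel. Your count of unequal pairs is a slightly sharper way of stating the paper's ``even number of local reversals'' observation, and it is exactly the $q=-1$ case of the phase criterion in Theorem~\ref{thm:phasecriterion}.
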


\begin{proof}
This can be proved in two ways. One proof follows the algebraic
strategy used in Ref.~\cite{PadmanabhanKorepin2024Clifford}, while
the other follows from a direct expansion of both sides of the
scalene relation. We give the latter explicitly in Appendix \ref{app:twogenerator}. 

A short summary of why the latter works is as follows. As operators on different tensor factors commute, every monomial can be grouped site by site.  Expanding the coefficient of a fixed product of $a,b,c,d,e,f$ amounts to reversing local orders at a subset of the three sites.  Each mixed term contains an even number of local reversals, so the minus signs generated by \eqref{eq:anti} cancel pairwise.  The two unmixed terms are unchanged.  Therefore the eight coefficient sectors agree separately. 
\end{proof}

\begin{remark}
    An important feature of Theorem~\ref{thm:anticomm} is that it holds for arbitrary values of the parameters $a,\cdots, f$. If they are taken to be spectral parameter-dependent functions, then they can be arbitrarily chosen from a set of free functions.
\end{remark}

\begin{remark}
Theorem~\ref{thm:anticomm} is often conveniently realized with Clifford or Pauli generators, in the spirit of the representation-independent Clifford constructions of Refs.~\cite{PadmanabhanKorepin2024Clifford,PadmanabhanSinghKorepin2025CliffordSolver}, but the theorem itself is more general.  A Clifford algebra would additionally impose quadratic relations on the squares.  Here the universal local algebra generated by $X^{(1)}_{j;\alpha}, X^{(2)}_{j;\alpha}$ is simply
\begin{equation}
\C\langle X^{(1)}_{j;\alpha},X^{(2)}_{j;\alpha}\rangle/(X^{(1)}_{j;\alpha} X^{(2)}_{j;\alpha} + X^{(2)}_{j;\alpha} X^{(1)}_{j;\alpha})~~;~~\alpha\in\{I,II,III\}
\end{equation}
on a site $j$.  The Clifford language therefore describes a distinguished class of representations, not the universal algebraic input.
\end{remark}

\begin{example}[Pauli realization]
As a concrete example consider a two-dimensional representation for the three algebras realized by the Pauli matrices
\begin{equation}
\left(X^{(1)}_{j;I},X^{(2)}_{j;I}\right)\equiv(\sigma^x_j,\sigma^y_j),\qquad \left(X^{(1)}_{j;II},X^{(2)}_{j;II}\right)\equiv(\sigma^y_j,\sigma^z_j),\qquad \left(X^{(1)}_{j;III},X^{(2)}_{j;III}\right)\equiv(\sigma^z_j,\sigma^x_j),
\end{equation}
where $\sigma^x,\sigma^y,\sigma^z$ are the spin $\frac{1}{2}$ Pauli matrices.  Then
\begin{equation}
A_{12}=a~\sigma^x_1\sigma^y_2+b~\sigma^y_1\sigma^z_2~;~
B_{13}=c~\sigma^x_1\sigma^z_3+d~\sigma^y_1\sigma^x_3~;~
C_{23}=e~\sigma^y_2\sigma^z_3+f~\sigma^z_2\sigma^x_3,
\end{equation}
and Theorem~\ref{thm:anticomm} gives \eqref{eq:SYBabc} identically.  As concrete $4\times4$ matrices these are,
\begin{eqnarray}
& A =
\begin{pmatrix}
0 & 0 & -\mathrm{i}b & -\mathrm{i}a \\
0 & 0 & \mathrm{i}a & \mathrm{i}b \\
\mathrm{i}b & -\mathrm{i}a & 0 & 0 \\
\mathrm{i}a & -\mathrm{i}b & 0 & 0
\end{pmatrix}~;~B =
\begin{pmatrix}
0 & 0 & c & -\mathrm{i}d \\
0 & 0 & -\mathrm{i}d & -c \\
c & \mathrm{i}d & 0 & 0 \\
\mathrm{i}d & -c & 0 & 0
\end{pmatrix}, &
\nonumber \\[1em]
& C =
\begin{pmatrix}
0 & f & -\mathrm{i}e & 0 \\
f & 0 & 0 & \mathrm{i}e \\
\mathrm{i}e & 0 & 0 & -f \\
0 & -\mathrm{i}e & -f & 0
\end{pmatrix}. &
\end{eqnarray}
Furthermore, none of $A$, $B$ or $C$ satisfy the standard non-braided YBE for non-zero choices of the parameters $a,\cdots, f$. That is $$ R_{12}R_{13}R_{23} \neq R_{23}R_{13}R_{12}~;~R\in\{A,B,C\}.$$
Thus this simple example already shows that we have a purely scalene triple.
\end{example}

The structure of \eqref{eq:twogenerator} is naturally visualized on a triangle (See Figure \ref{fig:triangle_twogenerator}).
\begin{figure}[t]
\centering
\begin{tikzpicture}[
    scale=1.0,
    vertex/.style={circle, draw, thick, minimum size=8mm, inner sep=0pt},
    edgelabel/.style={fill=white, inner sep=1.5pt, font=\small, align=center},
    gen1/.style={text=blue!70!black, font=\small},
    gen2/.style={text=red!75!black, font=\small},
    smallnote/.style={font=\small}
]

\node[vertex] (1) at (90:2.2) {$1$};
\node[vertex] (2) at (210:2.2) {$2$};
\node[vertex] (3) at (330:2.2) {$3$};

\draw[thick] (1) -- (2);
\draw[thick] (1) -- (3);
\draw[thick] (2) -- (3);

\node[edgelabel] at ($(1)!0.5!(2)+(-0.55,0.15)$)
{$A_{12}=a\,X_1^{(1)}X_2^{(1)}$\\[-1mm]
$\hphantom{R_{12}=}{}+\,b\,X_1^{(2)}X_2^{(2)}$};

\node[edgelabel] at ($(1)!0.5!(3)+(0.55,0.15)$)
{$B_{13}=c\,X_1^{(1)}X_3^{(1)}$\\[-1mm]
$\hphantom{R_{13}=}{}+\,d\,X_1^{(2)}X_3^{(2)}$};

\node[edgelabel] at ($(2)!0.5!(3)+(0,-0.45)$)
{$C_{23}=e\,X_2^{(1)}X_3^{(1)}$\\[-1mm]
$\hphantom{R_{23}=}{}+\,f\,X_2^{(2)}X_3^{(2)}$};

\node[gen1] at ($(1)+(-0.85,0.45)$) {$X_1^{(1)}$};
\node[gen2] at ($(1)+(0.85,0.45)$) {$X_1^{(2)}$};

\node[gen1] at ($(2)+(-0.95,-0.35)$) {$X_2^{(1)}$};
\node[gen2] at ($(2)+(0.00,-0.80)$) {$X_2^{(2)}$};

\node[gen1] at ($(3)+(0.95,-0.35)$) {$X_3^{(1)}$};
\node[gen2] at ($(3)+(0.00,-0.80)$) {$X_3^{(2)}$};

\node[smallnote] at (0,-3.15)
{$A_{12}B_{13}C_{23}\ \longleftrightarrow\ C_{23}B_{13}A_{12}$};

\end{tikzpicture}
\caption{Triangular organization of the anticommuting two-generator construction.
The blue and red labels indicate the two matched generator families
$\{X_1^{(1)},X_2^{(1)},X_3^{(1)}\}$ and
$\{X_1^{(2)},X_2^{(2)},X_3^{(2)}\}$.
Each scalene operator is an arbitrary linear combination of the two matched-generator contributions.
For mixed monomials, reversing the order
$A_{12}B_{13}C_{23}\mapsto C_{23}B_{13}A_{12}$
reverses the local generator order at an even number of vertices, so the anticommutation signs cancel.}
\label{fig:triangle_twogenerator}
\end{figure}
There are two matched generator families, one built from the first local generator at every vertex and the other from the second.  Each edge is an arbitrary linear combination of the two matched generator contributions. Reversing the order of the three scalene operators reverses the order of local generators at pairs of vertices.  The anticommutation signs therefore appear with even parity. 

The diagrammatics suggest that the anticommuting structure can be replaced by a $q$-deformation. And indeed in the next section we will show that this parity cancellation is nothing but a $q=-1$ specialization of a more general multiplicative phase cancellation.

\section{Two-generator algebras : $q$-deformed case}
\label{sec:qdeform}
We now relax the anticommutation relations \eqref{eq:anti} while keeping the  two-generator structure of the three associative algebras 
$\mathcal S_\alpha$, $\alpha\in\{I,II,III\}$. We replace \eqref{eq:anti} by
\begin{equation}
X_{j;\alpha}^{(1)}X_{j;\alpha}^{(2)}
=
q_\alpha~
X_{j;\alpha}^{(2)}X_{j;\alpha}^{(1)},
\qquad 
q_\alpha\in\C^\times,
\qquad
\alpha\in\{I,II,III\}.
\label{eq:qplane_local}
\end{equation}
As in the previous section, no relations are imposed between generators
belonging to different algebras.  When they act on distinct lattice sites,
they commute as a consequence of the tensor product structure.
We retain exactly the same ans\"atze for the three operators as in
\eqref{eq:twogenerator}.
The coefficients $a,b,c,d,e,f\in\C$ are again arbitrary, and may in
particular be arbitrary functions of spectral parameters. The scalene
relation now determines how the three deformation parameters
$q_I,q_{II},q_{III}$ must be related.

\begin{theorem}[Two-generator $q$-deformed construction]
\label{thm:q2}
Assume that the ordered monomials in each of the three algebras
$\mathcal S_\alpha$ are linearly independent. Then the operators
\eqref{eq:twogenerator} satisfy
\begin{equation}
A_{12}B_{13}C_{23}
=
C_{23}B_{13}A_{12}
\label{eq:SYBq}
\end{equation}
for arbitrary $a,b,c,d,e,f$ if and only if
\begin{equation}
q_I=q_{III}=q,
\qquad
q_{II}=q^{-1},
\qquad
q\in\C^\times.
\label{eq:qorientation}
\end{equation}
\end{theorem}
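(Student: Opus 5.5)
The plan is a direct expansion, organised by coefficient sectors in the same way as the summary proof of Theorem~\ref{thm:anticomm}. Write $A_{12}=\sum_{\mu}a_\mu X^{(\mu)}_{1;I}X^{(\mu)}_{2;II}$ with $(a_1,a_2)=(a,b)$. Similarly write $B_{13}=\sum_\nu c_\nu X^{(\nu)}_{1;I}X^{(\nu)}_{3;III}$ with $(c_1,c_2)=(c,d)$, and $C_{23}=\sum_\lambda e_\lambda X^{(\lambda)}_{2;II}X^{(\lambda)}_{3;III}$ with $(e_1,e_2)=(e,f)$.

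Both sides of \eqref{eq:SYBq} then split into eight sectors labelled by $(\mu,\nu,\lambda)\in\{1,2\}^3$, each carrying the coefficient $a_\mu c_\nu e_\lambda$. Operators on different sites commute, so each monomial can be grouped site by site. In the sector $(\mu,\nu,\lambda)$ the two sides are as follows.
\begin{itemize}[label={}]
\item Left side: $X^{(\mu)}_{1;I}X^{(\nu)}_{1;I}\cdot X^{(\mu)}_{2;II}X^{(\lambda)}_{2;II}\cdot X^{(\nu)}_{3;III}X^{(\lambda)}_{3;III}$.
\item Right side: the same product with every local order reversed.
\end{itemize}

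Next, apply \eqref{eq:qplane_local} at each site. Put $\phi_\alpha(s,t)=1$ if $s=t$, $\phi_\alpha(1,2)=q_\alpha$ and $\phi_\alpha(2,1)=q_\alpha^{-1}$, so that $X^{(s)}_\alpha X^{(t)}_\alpha=\phi_\alpha(s,t)X^{(t)}_\alpha X^{(s)}_\alpha$. The left-hand monomial of sector $(\mu,\nu,\lambda)$ then equals
\[
\Phi(\mu,\nu,\lambda)=\phi_I(\mu,\nu)\,\phi_{II}(\mu,\lambda)\,\phi_{III}(\nu,\lambda)
\]
times the right-hand monomial. Sufficiency follows once $\Phi\equiv1$.

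To tabulate $\Phi$, note that the sectors $(1,1,1)$ and $(2,2,2)$ give $\Phi=1$ automatically. The remaining six come in pairs related by swapping $1\leftrightarrow2$, and each pair gives mutually inverse phases. It therefore suffices to compute three sectors:
\[
\Phi(1,1,2)=q_{II}q_{III},\qquad \Phi(1,2,1)=q_Iq_{III}^{-1},\qquad \Phi(2,1,1)=q_I^{-1}q_{II}^{-1}.
\]
Setting all three equal to $1$ gives exactly $q_I=q_{III}=q$ and $q_{II}=q^{-1}$, which is \eqref{eq:qorientation}. This proves the ``if'' direction.

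For the ``only if'' direction I would use the linear-independence hypothesis. On each site, the normal-ordered words $X^{(s)}X^{(t)}$ with $s\le t$, i.e.\ $X^{(1)}X^{(1)}$, $X^{(1)}X^{(2)}$ and $X^{(2)}X^{(2)}$, are linearly independent. Hence their tensor products over the three sites are linearly independent in the three-site algebra.

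After normal ordering, the sector $(\mu,\nu,\lambda)$ of $A_{12}B_{13}C_{23}-C_{23}B_{13}A_{12}$ is $a_\mu c_\nu e_\lambda$ times a fixed scalar multiple of one product monomial. The points to check are that distinct sectors yield distinct monomials and that this scalar is nonzero unless $\Phi=1$. For example, $(1,1,2)$ gives $X^{(1)}X^{(1)}\otimes X^{(1)}X^{(2)}\otimes X^{(1)}X^{(2)}$, and the eight sectors are separated by their site-wise generator content. Independence of these monomials, together with the freedom to choose $a_\mu c_\nu e_\lambda\neq0$, then forces $\Phi(\mu,\nu,\lambda)=1$ in every sector, which yields \eqref{eq:qorientation}.

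I expect the main obstacle to be exactly this bookkeeping step. One must verify that no two sectors collapse onto the same normal-ordered monomial, so that no cancellation between sectors can occur. One must also make precise that ``ordered monomials linearly independent'' applies to the tensor products across the three sites and not only within each $\mathcal S_\alpha$.
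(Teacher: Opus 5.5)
Your proposal is correct and follows essentially the paper's proof: expand both sides into the eight coefficient sectors, normal-order site by site using the local $q$-relations, and compare. This yields $q_{II}q_{III}=1$, $q_Iq_{II}=1$, $q_I=q_{III}$; your phase $\Phi(\mu,\nu,\lambda)$ is just the $N=2$ case of the paper's later phase-cancellation criterion (Theorem~\ref{thm:phasecriterion}). The bookkeeping step you flag is harmless: the eight sectors do give distinct site-wise normal-ordered monomials, and in any case the arbitrariness of $a,\dots,f$ already separates the sectors, which is how the paper argues.
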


\begin{proof}
For notational convenience, only within this proof, define
\begin{equation}
(x_I,y_I)
\equiv
\left(X_{1;I}^{(1)},X_{1;I}^{(2)}\right)~;~
(x_{II},y_{II})
\equiv
\left(X_{2;II}^{(1)},X_{2;II}^{(2)}\right)~;~
(x_{III},y_{III})
\equiv
\left(X_{3;III}^{(1)},X_{3;III}^{(2)}\right).
\end{equation}
The local relations \eqref{eq:qplane_local} then become
\begin{equation}
x_\alpha y_\alpha
=
q_\alpha y_\alpha x_\alpha,
\qquad
\alpha\in\{I,II,III\}.
\label{eq:qshortrelation}
\end{equation}
We choose the canonical ordering in which $x_\alpha$ precedes $y_\alpha$
at each site.  Since operators acting on distinct sites commute, every
monomial occurring in the cubic products can first be grouped site by site
and then brought to this canonical local ordering.

Expanding the left hand side of \eqref{eq:SYBq} gives
\begin{align}
A_{12}B_{13}C_{23}
={}&
ace\,x_I^2x_{II}^2x_{III}^2
+
acf\,x_I^2x_{II}y_{II}x_{III}y_{III}
\notag\\
&+
\frac{ade}{q_{III}}\,
x_Iy_Ix_{II}^2x_{III}y_{III}
+
adf\,
x_Iy_Ix_{II}y_{II}y_{III}^2
\notag\\
&+
\frac{bce}{q_Iq_{II}}\,
x_Iy_Ix_{II}y_{II}x_{III}^2
+
\frac{bcf}{q_I}\,
x_Iy_Iy_{II}^2x_{III}y_{III}
\notag\\
&+
\frac{bde}{q_{II}q_{III}}\,
y_I^2x_{II}y_{II}x_{III}y_{III}
+
bdf\,
y_I^2y_{II}^2y_{III}^2 .
\label{eq:qLHS}
\end{align}
On the other hand, reversing the order of the three scalene operators gives
\begin{align}
C_{23}B_{13}A_{12}
={}&
ace\,x_I^2x_{II}^2x_{III}^2
+
\frac{acf}{q_{II}q_{III}}\,
x_I^2x_{II}y_{II}x_{III}y_{III}
\notag\\
&+
\frac{ade}{q_I}\,
x_Iy_Ix_{II}^2x_{III}y_{III}
+
\frac{adf}{q_Iq_{II}}\,
x_Iy_Ix_{II}y_{II}y_{III}^2
\notag\\
&+
bce\,
x_Iy_Ix_{II}y_{II}x_{III}^2
+
\frac{bcf}{q_{III}}\,
x_Iy_Iy_{II}^2x_{III}y_{III}
\notag\\
&+
bde\,
y_I^2x_{II}y_{II}x_{III}y_{III}
+
bdf\,
y_I^2y_{II}^2y_{III}^2 .
\label{eq:qRHS}
\end{align}

Since the six coefficients $a,b,c,d,e,f$ are arbitrary and the ordered
monomials are assumed to be linearly independent, the corresponding
coefficient sectors on the two sides must agree separately. Comparing terms on both sides of the scalene relations gives 
\begin{equation}
q_{II}q_{III}=1~;~q_Iq_{II}=1~;~q_I=q_{III}
\label{eq:qcond1}
\end{equation}
which implies that
\begin{equation}
q_I=q_{III}=q,
\qquad
q_{II}=q^{-1},
\end{equation}
proving \eqref{eq:qorientation}.
Conversely, substitution of these relations into
\eqref{eq:qLHS} and \eqref{eq:qRHS} makes all eight coefficient sectors
identical, proving \eqref{eq:SYBq}.
\end{proof}

The three local algebras selected by Theorem~\ref{thm:q2} can therefore
be written as
\begin{subequations}
\label{eq:threeqplanes}
\begin{align}
\mathcal S_I(q)
&=
\C\langle X_I^{(1)},X_I^{(2)}\rangle
\Big/
\left(
X_I^{(1)}X_I^{(2)}
-
qX_I^{(2)}X_I^{(1)}
\right),
\\
\mathcal S_{II}(q^{-1})
&=
\C\langle X_{II}^{(1)},X_{II}^{(2)}\rangle
\Big/
\left(
X_{II}^{(1)}X_{II}^{(2)}
-
q^{-1}X_{II}^{(2)}X_{II}^{(1)}
\right),
\\
\mathcal S_{III}(q)
&=
\C\langle X_{III}^{(1)},X_{III}^{(2)}\rangle
\Big/
\left(
X_{III}^{(1)}X_{III}^{(2)}
-
qX_{III}^{(2)}X_{III}^{(1)}
\right).
\end{align}
\end{subequations}
For generic $q$, an algebra generated by two elements $x,y$ satisfying
\begin{equation}
xy=qyx
\end{equation}
is the standard quantum plane or the Manin plane \cite{Manin1987Koszul,Manin1988,Manin2018QuantumGroups}. Thus Theorem~\ref{thm:q2} replaces the
three anticommuting local algebras of Section~\ref{sec:twogenerator} by
three quantum-plane algebras, with the deformation parameter at the
middle algebra inverted relative to the other two.

The anticommuting construction is recovered at $q=-1$, since $q^{-1}=q$ in this case. At $q=1$, the two generators commute within each of the three local algebras. For generic $q$, however, the three local algebras are genuinely noncommutative quantum planes \cite{Manin1988,Schmudgen2002}.

\begin{remark}[Orientation of the deformation parameter]
\label{rem:qorientation}
The pattern $q_I=q$, $q_{II}=q^{-1}$ and $q_{III}=q$
has a natural oriented interpretation. Indeed, the middle relation
\begin{equation}
X_{II}^{(1)}X_{II}^{(2)}
=
q^{-1}X_{II}^{(2)}X_{II}^{(1)}
\end{equation}
can equivalently be written as
\begin{equation}
X_{II}^{(2)}X_{II}^{(1)}
=
qX_{II}^{(1)}X_{II}^{(2)}.
\end{equation}
Thus the same deformation parameter $q$ may be associated with all three vertices provided the ordering of the two generators at the middle vertex is reversed. The inversion of $q$ is therefore not an additional assumption; it is forced by the scalene Yang--Baxter relation. In the next section this oriented phase cancellation will emerge as the two-generator specialization of the general $N$-generator construction.
\end{remark}

\section{$N$-generator algebras : multiparameter deformations}
\label{sec:Ngenerator}
We continue working with the three associative and noncommutative algebras $\mathcal S_\alpha$, with $\alpha\in\{I,II,III\}$,
but this time with $N$ generators. They are defined by the sets
\begin{equation}
\left\{
X_{j;\alpha}^{(\mu)}\Big|\mu\in\{1,2,\cdots,N\}
\right\}~;~
\alpha\in\{I,II,III\}.
\label{eq:Ngenerators}
\end{equation}
As before, the index $\alpha$ specifies the algebra, the index $\mu$
specifies the generator number, and $j$ denotes the lattice site on which the generator acts. Within each algebra we impose the multiparameter quadratic relations
\begin{equation}
X_{j;\alpha}^{(\mu)}X_{j;\alpha}^{(\nu)}
=
q_{\alpha}^{\mu\nu}
X_{j;\alpha}^{(\nu)}X_{j;\alpha}^{(\mu)},
\qquad
\alpha\in\{I,II,III\},
\qquad
\mu,\nu\in\{1,\ldots,N\}.
\label{eq:qimunu}
\end{equation}
The following constraints on the non-zero deformation parameters automatically follow from the defining relations of the generators,
\begin{equation}
q_{\alpha}^{\mu\mu}=1,
\qquad
q_{\alpha}^{\mu\nu}q_{\alpha}^{\nu\mu}=1.
\label{eq:antisymq}
\end{equation}
No relations are imposed between generators belonging to different
algebras. As in Section~\ref{sec:twogenerator}, generators acting on
different lattice sites commute with one another as a consequence of the tensor product structure.

The algebras defined by \eqref{eq:qimunu} are multiparameter quantum
affine spaces \cite{Manin1989Multiparametric}, generalizing the two-generator quantum or Manin plane of the previous section.  We now determine the restrictions on the parameters $q_{\alpha}^{\mu\nu}$ that are imposed by the scalene Yang--Baxter relation.

The ansatz for the scalene triple is a $N$-generator generalization of \eqref{eq:twogenerator}:
\begin{equation}
\label{eq:Nedge}
A_{12} = \sum_{\mu=1}^{N} a_{\mu}\, X_{1;I}^{(\mu)} X_{2;II}^{(\mu)}~;~
B_{13} = \sum_{\nu=1}^{N} b_{\nu}\, X_{1;I}^{(\nu)}X_{3;III}^{(\nu)}~;~
C_{23}=\sum_{\rho=1}^{N}c_{\rho}\,X_{2;II}^{(\rho)}X_{3;III}^{(\rho)}.
\end{equation}
Here the coefficient families $\{a_\mu\}$, $\{b_\mu\}$ and $\{c_\mu\}$ 
are arbitrary complex scalars. As in the two-generator construction,
they may equally well be taken to be arbitrary spectral parameter-dependent functions. Before we find the conditions on these multiple parameters imposed by the scalene relation, we need a notion of linear independence of operators in the associative algebras. This is given by the Poincar\'e--Birkhoff--Witt (PBW) property. For each algebra $\mathcal S_\alpha$, choose the ordering
\begin{equation}
X_{\alpha}^{(1)}
<
X_{\alpha}^{(2)}
<
\cdots
<
X_{\alpha}^{(N)}.
\end{equation}
The corresponding canonically ordered monomials are
\begin{equation}
\left(X_{\alpha}^{(1)}\right)^{n_1}
\left(X_{\alpha}^{(2)}\right)^{n_2}
\cdots
\left(X_{\alpha}^{(N)}\right)^{n_N},
\qquad
n_1,\ldots,n_N\in\mathbb N_0.
\label{eq:PBWbasis}
\end{equation}
The PBW property means that every word in the generators can be
reordered, using the defining relations \eqref{eq:qimunu}
into a scalar multiple of a monomial of the form
\eqref{eq:PBWbasis}, and that these ordered monomials are linearly
independent. They therefore form a basis of $\mathcal S_\alpha$ as a
vector space. For the multiparameter quantum affine spaces considered
here, with non-zero exchange parameters satisfying $q_{\alpha}^{\mu\nu}q_{\alpha}^{\nu\mu}=1$, the ordered monomials \eqref{eq:PBWbasis} constitute the standard PBW basis. In particular, distinct canonically ordered monomials cannot cancel through additional algebraic relations. This follows, for example, from the standard Diamond Lemma normal-form argument \cite{Bergman1978Diamond,Oh2008QuantumPoisson}. This property allows us to compare the coefficients of the different ordered monomials independently when imposing the scalene Yang--Baxter relation.

\begin{theorem}[General phase-cancellation criterion]
\label{thm:phasecriterion}
Suppose that the canonically ordered monomials in each of the three
algebras $\mathcal S_\alpha$ form a linearly independent PBW-type set.
Then the operators \eqref{eq:Nedge} satisfy the scalene Yang--Baxter
relation
\begin{equation}
A_{12}B_{13}C_{23}
=
C_{23}B_{13}A_{12}
\label{eq:NSYB}
\end{equation}
for arbitrary independent coefficient families
$\{a_\mu\},\{b_\mu\},\{c_\mu\}$ if and only if
\begin{equation}
q_I^{\mu\nu}
q_{II}^{\mu\rho}
q_{III}^{\nu\rho}
=
1,
\qquad
\text{for all }
\mu,\nu,\rho\in\{1,\ldots,N\}.
\label{eq:phasecriterion}
\end{equation}
\end{theorem}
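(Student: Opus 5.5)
Expand both sides of \eqref{eq:NSYB} and compare coefficients sector by sector, exactly as in the proof of Theorem~\ref{thm:q2}. Using \eqref{eq:Nedge},
\begin{equation*}
A_{12}B_{13}C_{23}=\sum_{\mu,\nu,\rho}a_\mu b_\nu c_\rho\,
X_{1;I}^{(\mu)}X_{2;II}^{(\mu)}X_{1;I}^{(\nu)}X_{3;III}^{(\nu)}X_{2;II}^{(\rho)}X_{3;III}^{(\rho)},
\end{equation*}
and, since operators on distinct sites commute, each summand equals $X_{1;I}^{(\mu)}X_{1;I}^{(\nu)}\,X_{2;II}^{(\mu)}X_{2;II}^{(\rho)}\,X_{3;III}^{(\nu)}X_{3;III}^{(\rho)}$. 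Likewise the summand of $C_{23}B_{13}A_{12}$ with the same coefficient $a_\mu b_\nu c_\rho$ equals $X_{1;I}^{(\nu)}X_{1;I}^{(\mu)}\,X_{2;II}^{(\rho)}X_{2;II}^{(\mu)}\,X_{3;III}^{(\rho)}X_{3;III}^{(\nu)}$. Every local factor is reversed. Applying \eqref{eq:qimunu} at each site turns the left summand into $q_I^{\mu\nu}q_{II}^{\mu\rho}q_{III}^{\nu\rho}$ times the right summand. Hence
\begin{equation*}
A_{12}B_{13}C_{23}-C_{23}B_{13}A_{12}=\sum_{\mu,\nu,\rho}a_\mu b_\nu c_\rho\bigl(q_I^{\mu\nu}q_{II}^{\mu\rho}q_{III}^{\nu\rho}-1\bigr)\,M_{\nu\mu}\otimes M_{\rho\mu}\otimes M_{\rho\nu},
\end{equation*}
where $M_{\nu\mu}$ denotes the relevant local product. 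This already proves sufficiency: if \eqref{eq:phasecriterion} holds, every summand vanishes.

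For necessity, I would bring each local quadratic word to canonical PBW order. Each such word is a nonzero scalar multiple of $(X^{(\min)})(X^{(\max)})$, or of $(X^{(\mu)})^2$ when the indices coincide. The product over the three sites is then a PBW monomial in the tensor product of the three algebras. Tensor products of PBW bases are linearly independent, so these products are independent.

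The key point, and the main obstacle, is distinguishing the triples. Distinct triples $(\mu,\nu,\rho)$ may give the same unordered data, so I must check that the map from a triple to its site-wise multisets $(\{\mu,\nu\},\{\mu,\rho\},\{\nu,\rho\})$ is injective. This holds. If the three indices are all distinct, the pairwise intersections recover them: $\mu$ is the common element of sites 1 and 2, $\nu$ of sites 1 and 3, and $\rho$ of sites 2 and 3. If some indices coincide, the multisets with multiplicity still determine the triple, by a short case check. Even without injectivity, there is a simpler route. Treat $a_\mu,b_\nu,c_\rho$ as independent indeterminates, or pick them as delta families $a=\delta_{\mu},b=\delta_\nu,c=\delta_\rho$, which the hypothesis of arbitrary coefficients allows. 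Then only one summand survives, and its PBW monomial is nonzero, which forces $q_I^{\mu\nu}q_{II}^{\mu\rho}q_{III}^{\nu\rho}=1$.

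I would use this delta-family argument as the cleanest proof of necessity. It needs only that a single nonzero scalar times a PBW monomial in the tensor product is nonzero, and this follows from the PBW hypothesis together with the linear independence of tensor products of basis elements. Finally, I would remark that for $N=2$ the condition reduces to \eqref{eq:qcond1}, recovering Theorem~\ref{thm:q2}.
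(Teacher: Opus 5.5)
Your proposal is correct and follows essentially the same route as the paper: you expand both products, group the factors site by site, reorder each local quadratic word with the exchange relations to produce the factor $q_I^{\mu\nu}q_{II}^{\mu\rho}q_{III}^{\nu\rho}$, and then compare the $(\mu,\nu,\rho)$ sectors using the arbitrariness of the coefficients together with PBW linear independence. Your delta-family specialization, resting on the fact that a product of nonzero local PBW monomials is nonzero in the tensor product, spells out the sector-separation step that the paper asserts in one line; the injectivity check you include is valid but, as you note, not needed.
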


\begin{proof}
Expanding the left hand side of \eqref{eq:NSYB} gives
\begin{align}
A_{12}B_{13}C_{23}
={}&
\sum_{\mu,\nu,\rho=1}^{N}
a_\mu b_\nu c_\rho
\left(
X_{1;I}^{(\mu)}
X_{1;I}^{(\nu)}
\right)
\left(
X_{2;II}^{(\mu)}
X_{2;II}^{(\rho)}
\right)
\left(
X_{3;III}^{(\nu)}
X_{3;III}^{(\rho)}
\right).
\label{eq:Nlhs}
\end{align}
Here we have used the fact that operators acting on distinct lattice
sites commute, allowing every term to be grouped site by site.

The reverse product is
\begin{align}
C_{23}B_{13}A_{12}
={}&
\sum_{\mu,\nu,\rho=1}^{N}
a_\mu b_\nu c_\rho
\left(
X_{1;I}^{(\nu)}
X_{1;I}^{(\mu)}
\right)
\left(
X_{2;II}^{(\rho)}
X_{2;II}^{(\mu)}
\right)
\left(
X_{3;III}^{(\rho)}
X_{3;III}^{(\nu)}
\right).
\label{eq:Nrhs0}
\end{align}
Using the quadratic relations \eqref{eq:qimunu}, the local products can
be returned to the same canonical ordering as in \eqref{eq:Nlhs}.
This gives us
\begin{align}
C_{23}B_{13}A_{12}
={}
\sum_{\mu,\nu,\rho=1}^{N}
\frac{a_\mu b_\nu c_\rho}
{q_I^{\mu\nu}
 q_{II}^{\mu\rho}
 q_{III}^{\nu\rho}}
\left(
X_{1;I}^{(\mu)}
X_{1;I}^{(\nu)}
\right)
\left(
X_{2;II}^{(\mu)}
X_{2;II}^{(\rho)}
\right)
\left(
X_{3;III}^{(\nu)}
X_{3;III}^{(\rho)}
\right).
\label{eq:Nrhs}
\end{align}
Since the coefficient families are arbitrary and the canonically ordered monomials are linearly independent, equality of \eqref{eq:Nlhs} and \eqref{eq:Nrhs} requires equality in every $(\mu,\nu,\rho)$ sector. This is equivalent to \eqref{eq:phasecriterion}.
Conversely, if this condition is satisfied,
then every coefficient sector in the two cubic products agrees, proving
\eqref{eq:NSYB}.
\end{proof}
The phase-cancellation condition \eqref{eq:phasecriterion} can be solved completely.

\begin{theorem}[Classification of the multiparameter deformations]
\label{thm:classification}
Under the assumptions of Theorem~\ref{thm:phasecriterion}, the general
solution of \eqref{eq:phasecriterion} is $q_I^{\mu\nu}=q_{III}^{\mu\nu}
=q^{\mu\nu}$, $q_{II}^{\mu\nu}=\left(q^{\mu\nu}\right)^{-1}$.
The exchange parameters are necessarily of the form $q^{\mu\nu}=\frac{\lambda_\mu}{\lambda_\nu}$, with $\lambda_\mu\in\C^\times$. The parameters $\lambda_\mu$ are unique up to a common non-zero rescaling.
\end{theorem}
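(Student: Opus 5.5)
Starting from the phase-cancellation criterion \eqref{eq:phasecriterion} of Theorem~\ref{thm:phasecriterion}, the plan is to specialize indices and use the normalizations \eqref{eq:antisymq}, namely $q_\alpha^{\mu\mu}=1$ and $q_\alpha^{\mu\nu}q_\alpha^{\nu\mu}=1$. First I set $\nu=\mu$, which gives $q_{II}^{\mu\rho}q_{III}^{\mu\rho}=1$. This means $q_{II}^{\mu\rho}=(q_{III}^{\mu\rho})^{-1}$ for all $\mu,\rho$. Next I set $\rho=\nu$, which gives $q_I^{\mu\nu}q_{II}^{\mu\nu}=1$. This means $q_I^{\mu\nu}=(q_{II}^{\mu\nu})^{-1}=q_{III}^{\mu\nu}$. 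Writing $q^{\mu\nu}:=q_I^{\mu\nu}$, these two specializations yield $q_I=q_{III}=q$ and $q_{II}=q^{-1}$ entrywise. (The choice $\rho=\mu$ just reproduces the same information via antisymmetry.)

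Substituting back into \eqref{eq:phasecriterion}, the general condition becomes
\begin{equation*}
q^{\mu\nu}\,(q^{\mu\rho})^{-1}\,q^{\nu\rho}=1,\qquad\text{i.e.}\qquad q^{\mu\rho}=q^{\mu\nu}q^{\nu\rho}\quad\text{for all }\mu,\nu,\rho.
\end{equation*}
This is a multiplicative cocycle (transitivity) condition. To solve it, I fix the reference index $\nu=N$ (any fixed index works) and define $\lambda_\mu:=q^{\mu N}\in\C^\times$. Applying the cocycle relation with $\nu\to N$ and using antisymmetry, I get $q^{\mu\rho}=q^{\mu N}q^{N\rho}=\lambda_\mu/\lambda_\rho$. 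Conversely, any $q^{\mu\nu}=\lambda_\mu/\lambda_\nu$ satisfies $q^{\mu\mu}=1$, antisymmetry, and the cocycle identity trivially. So by Theorem~\ref{thm:phasecriterion} it yields a solution, which establishes that this is the general solution.

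For uniqueness, suppose $\lambda_\mu/\lambda_\nu=\lambda'_\mu/\lambda'_\nu$ for all $\mu,\nu$. Then $\lambda'_\mu/\lambda_\mu$ is independent of $\mu$, so $\lambda'=t\lambda$ for some $t\in\C^\times$.

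The only mildly delicate point is making sure the specializations $\nu=\mu$ and $\rho=\nu$ are legitimate. Theorem~\ref{thm:phasecriterion} demands \eqref{eq:phasecriterion} for all index triples, including coincident ones, and those degenerate sectors are exactly the ones that force the orientation pattern. After that, the cocycle-to-coboundary step is routine, since the index set is a finite complete graph with no topological obstruction.
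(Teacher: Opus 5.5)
Your proposal is correct and follows the paper's proof essentially step for step. The same degenerate specializations ($\rho=\nu$ and $\mu=\nu$) force $q_I=q_{III}=q_{II}^{-1}$, the substituted criterion becomes $q^{\mu\nu}q^{\nu\rho}=q^{\mu\rho}$, a reference index trivializes it, and the converse and rescaling-uniqueness arguments then close the proof exactly as in the paper. The only cosmetic difference is that you put the reference index in the middle slot and invoke $q^{N\rho}=(q^{\rho N})^{-1}$, whereas the paper sets $\rho=\mu_0$ and simply divides, so it needs no appeal to antisymmetry at that step.
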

\begin{proof}
Setting $\rho=\nu$ in \eqref{eq:phasecriterion}, and using the fact that $q_{III}^{\nu\nu}=1$, one obtains $q_I^{\mu\nu}q_{II}^{\mu\nu}=1$
implying
\begin{equation}
q_{II}^{\mu\nu}
=
\left(q_I^{\mu\nu}\right)^{-1}.
\label{eq:q2q1}
\end{equation}
In the same way, setting $\mu=\nu$ in \eqref{eq:phasecriterion} and using the relation $q_I^{\mu\mu}=1$, gives $q_{II}^{\mu\rho}q_{III}^{\mu\rho}=1$. Combining these two we find
\begin{equation}
q_{III}^{\mu\rho}=\left(q_{II}^{\mu\rho}\right)^{-1}=q_I^{\mu\rho}.
\label{eq:q3q1}
\end{equation}
This proves the first part of the Theorem,
\begin{equation}
q^{\mu\nu}\equiv q_I^{\mu\nu}=q_{III}^{\mu\nu}=\left(q_{II}^{\mu\nu}\right)^{-1}.
\end{equation}
Substitution into the full phase criterion gives
\begin{equation}
q^{\mu\nu}
\left(q^{\mu\rho}\right)^{-1}
q^{\nu\rho}
=
1 ~~\Longrightarrow~~ q^{\mu\nu}q^{\nu\rho}
=
q^{\mu\rho}.
\label{eq:transitive}
\end{equation}
To solve \eqref{eq:transitive}, fix an arbitrary reference generator
index $\mu_0$ and define $\lambda_\mu=q^{\mu\mu_0}$.
Now setting $\rho=\mu_0$ in \eqref{eq:transitive} gives
\begin{equation}
q^{\mu\nu}q^{\nu\mu_0}
=
q^{\mu\mu_0}.
\end{equation}
Since all exchange parameters are non-zero, it follows that
\begin{equation}
q^{\mu\nu}
=
\frac{q^{\mu\mu_0}}{q^{\nu\mu_0}}
=
\frac{\lambda_\mu}{\lambda_\nu}.
\label{eq:qratio_general}
\end{equation}
Hence every solution of \eqref{eq:transitive} is necessarily of the
ratio form \eqref{eq:qratio_general}.
Conversely, if $q^{\mu\nu}=\frac{\lambda_\mu}{\lambda_\nu}$ then
\begin{equation}
q^{\mu\nu}q^{\nu\rho}
=
\frac{\lambda_\mu}{\lambda_\nu}
\frac{\lambda_\nu}{\lambda_\rho}
=
\frac{\lambda_\mu}{\lambda_\rho}
=
q^{\mu\rho},
\end{equation}
so the ratio form satisfies \eqref{eq:transitive}. This proves the second part of the Theorem.

Finally, this parametrization is unique up to a common non-zero
rescaling. Indeed, if
\begin{equation}
\frac{\lambda_\mu}{\lambda_\nu}
=
\frac{\widetilde{\lambda}_\mu}
     {\widetilde{\lambda}_\nu}
\end{equation}
for all $\mu,\nu$, then
\begin{equation}
\frac{\widetilde{\lambda}_\mu}{\lambda_\mu}
=
\frac{\widetilde{\lambda}_\nu}{\lambda_\nu}
\end{equation}
for all $\mu,\nu$. Hence there exists a single
$c\in\C^\times$ such that $\widetilde{\lambda}_\mu=c\,\lambda_\mu$
for every $\mu$.
\end{proof}

The condition \eqref{eq:transitive} may equivalently be written as
\begin{equation}
q^{\mu\nu}q^{\nu\rho}q^{\rho\mu}=1.
\label{eq:flatphase}
\end{equation}
Thus the product of the exchange factors around every triangle of
generator indices $(\mu,\nu,\rho)$ is trivial. In this sense the
scalene Yang--Baxter relation selects a ``flat'' multiparameter
deformation from the more general class of quantum affine spaces.

Combining Theorems~\ref{thm:phasecriterion} and
\ref{thm:classification} gives the principal result of the $N$-generator construction.

\begin{corollary}[$N$-generator scalene solution]
\label{cor:Nmain}
Let $\lambda_1,\ldots,\lambda_N\in\C^\times$.
In the algebras $\mathcal S_I$ and $\mathcal S_{III}$ impose
\begin{equation}
X_{j;\alpha}^{(\mu)}
X_{j;\alpha}^{(\nu)}
=
\frac{\lambda_\mu}{\lambda_\nu}
X_{j;\alpha}^{(\nu)}
X_{j;\alpha}^{(\mu)},
\qquad
\alpha\in\{I,III\},
\label{eq:outeralg}
\end{equation}
while in $\mathcal S_{II}$ impose the oppositely oriented relations
\begin{equation}
X_{j;II}^{(\mu)}
X_{j;II}^{(\nu)}
=
\frac{\lambda_\nu}{\lambda_\mu}
X_{j;II}^{(\nu)}
X_{j;II}^{(\mu)}.
\label{eq:middlealg}
\end{equation}
Then the operators $A_{12},B_{13},C_{23}$ defined in
\eqref{eq:Nedge} satisfy
\begin{equation}
A_{12}B_{13}C_{23}
=
C_{23}B_{13}A_{12}
\end{equation}
for arbitrary coefficient families
$\{a_\mu\},\{b_\mu\},\{c_\mu\}$.
\end{corollary}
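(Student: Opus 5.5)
The plan is to reduce the corollary to the sufficiency direction of Theorem~\ref{thm:phasecriterion}, by checking that the prescribed exchange parameters satisfy the phase-cancellation condition \eqref{eq:phasecriterion}. Reading off from \eqref{eq:outeralg} and \eqref{eq:middlealg}, we set
\begin{equation*}
q_I^{\mu\nu}=q_{III}^{\mu\nu}=\frac{\lambda_\mu}{\lambda_\nu},\qquad q_{II}^{\mu\nu}=\frac{\lambda_\nu}{\lambda_\mu}.
\end{equation*}
These are non-zero and satisfy the consistency constraints \eqref{eq:antisymq}: $q^{\mu\mu}=1$ and $q^{\mu\nu}q^{\nu\mu}=1$. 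So each algebra is a genuine multiparameter quantum affine space of the type considered above.

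For every triple $(\mu,\nu,\rho)$ the product telescopes:
\begin{equation*}
q_I^{\mu\nu}q_{II}^{\mu\rho}q_{III}^{\nu\rho}=\frac{\lambda_\mu}{\lambda_\nu}\cdot\frac{\lambda_\rho}{\lambda_\mu}\cdot\frac{\lambda_\nu}{\lambda_\rho}=1,
\end{equation*}
which is exactly \eqref{eq:phasecriterion}.

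The remaining subtlety is that only the ``if'' direction is needed here, and this direction does not actually use the PBW hypothesis. Linear independence was needed only to extract the necessary conditions. For sufficiency, I would rerun the computation from the proof of Theorem~\ref{thm:phasecriterion} directly. Expand $C_{23}B_{13}A_{12}$ as in \eqref{eq:Nrhs0}, group the factors site by site using commutativity across distinct sites, and apply one quadratic relation at each site to restore the ordering of \eqref{eq:Nlhs}. Each $(\mu,\nu,\rho)$ term then acquires the factor $1/(q_I^{\mu\nu}q_{II}^{\mu\rho}q_{III}^{\nu\rho})=1$. The two expansions therefore agree term by term, for arbitrary $\{a_\mu\},\{b_\mu\},\{c_\mu\}$, with no appeal to linear independence.

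I expect no real obstacle. The only point needing care is the orientation of the middle algebra, where $q_{II}^{\mu\rho}$ must be read as $\lambda_\rho/\lambda_\mu$ so that the telescoping closes.
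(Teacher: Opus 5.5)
Your proposal is correct and is essentially the paper's argument: the paper obtains the corollary by combining the sufficiency direction of Theorem~\ref{thm:phasecriterion} with the ratio form of Theorem~\ref{thm:classification}, and your telescoping check $\frac{\lambda_\mu}{\lambda_\nu}\cdot\frac{\lambda_\rho}{\lambda_\mu}\cdot\frac{\lambda_\nu}{\lambda_\rho}=1$ is exactly that substitution. Your remark that the ``if'' direction uses only the reordering relations and no linear independence is a correct and useful sharpening, since it lets the corollary apply directly to representations such as the finite Weyl matrices, where the ordered monomials are certainly not independent.
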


The following observations further elaborate on these solutions.
\begin{enumerate}
    \item The construction contains $N$ non-zero parameters
$\lambda_1,\ldots,\lambda_N$, but only $N-1$ of them are independent. This is seen from the fact that the common rescaling $\lambda_\mu\longmapsto c\,\lambda_\mu$,
leaves every ratio $\lambda_\mu/\lambda_\nu$ invariant. The exchange
matrix selected by the scalene Yang--Baxter relation therefore contains
$N-1$ independent multiplicative deformation parameters. One may, for
example, fix this redundancy by choosing $\lambda_1=1$.
\item The $N=2$ case can now be seen as a special case of this construction. Choose $\lambda_1=1$ and $\lambda_2=q^{-1}$. Then $q^{12}
=\frac{\lambda_1}{\lambda_2}=q$,
so that the relations in $\mathcal S_I$, $\mathcal S_{II}$ and $\mathcal S_{III}$ reduce to exactly those found in Section \ref{sec:qdeform}.
The anticommuting case of
Section~\ref{sec:twogenerator} is recovered at $q=-1$.
\end{enumerate}
We close this section with the following remarks.
\begin{remark}
For $N=2$, the choice $q^{12}=-1$, produces an anticommuting pair at every site because $(-1)^{-1}=-1$.  This is the source of the simple Clifford/Pauli realization in Section~\ref{sec:twogenerator}. However, this argument fails for an arbitrary $N$. For $N\ge3$, one cannot choose $q^{\mu\nu}=-1$ for $\mu\neq\nu$ for all pairs. Indeed, for three distinct labels $\mu,\nu,\rho$, transitivity would require $(-1)(-1)=-1$, which is false. Thus a complete graph of pairwise anticommuting generator labels is incompatible with the generic $N$-generator matched-edge construction once three distinct generators are present. This implies that we cannot use this construction for a Clifford algebra with $N>2$ generators.

There is nevertheless a useful $\mathbb Z_2$ specialization. Let $\lambda_\mu\in\{+1,-1\}$. Then $q_{\mu\nu}=\frac{\lambda_\mu}{\lambda_\nu}\in\{+1,-1\}$. Generators in the same sign class commute, while generators in opposite sign classes anticommute. The anticommutation graph is therefore complete bipartite rather than complete. The two-generator Clifford construction is precisely the smallest nontrivial member of this bipartite pattern.  
\end{remark}
\begin{remark}
    A multiparameter quantum affine $N$-space is generated by $x_1,\ldots,x_N$ with relations
\begin{equation}
x_\mu x_\nu=q_{\mu\nu}x_\nu x_\mu,
\qquad
q_{\mu\mu}=1,
\qquad
q_{\mu\nu}q_{\nu\mu}=1.
\label{eq:qaffine}
\end{equation}
These algebras are standard examples of iterated skew-polynomial rings and admit ordered monomial bases under the usual nonvanishing parameter assumptions \cite{GoodearlLetzter2008,MukherjeeBera2024}.  Our construction does not use the most general multiplicatively antisymmetric matrix.  Instead the scalene YBE selects the rank-one multiplicative form $q_{\mu\nu}=\frac{\lambda_\mu}{\lambda_\nu}$.
In this sense the scalene relation imposes an exactness condition on the exchange phases.

The transitivity relation \eqref{eq:transitive} can be rewritten as
$q_{\mu\nu}q_{\nu\rho}q_{\rho\mu}=1$.
We may therefore view $q_{\mu\nu}$ as a multiplicative connection on the complete graph of generator labels, with trivial holonomy on every triangle.  The potentials $\lambda_\mu$ trivialize this connection.  This language is only an interpretation, but it captures why the parameter solution is so restrictive: arbitrary pairwise exchange phases are not allowed; they must be globally compatible with a single set of generator parameters.
\end{remark}
\begin{remark}
We now look at a simple semiclassical interpretation of the quantum affine space algebra. Write $\lambda_\mu=e^{\hbar\ell_\mu}$ and thus $q_{\mu\nu}=e^{\hbar(\ell_\mu-\ell_\nu)}$. Then
\begin{equation}
x_\mu x_\nu=e^{\hbar(\ell_\mu-\ell_\nu)}x_\nu x_\mu.
\end{equation}
Formally expanding around $\hbar=0$ gives the log-canonical Poisson bracket
\begin{equation}
\{x_\mu,x_\nu\}
=(\ell_\mu-\ell_\nu)x_\mu x_\nu,
\label{eq:Poisson}
\end{equation}
with the opposite sign at the middle site. Log-canonical Poisson structures arise naturally as semiclassical limits of quantum affine spaces \cite{GoodearlLetzter2008}. In the present case the antisymmetric coefficient matrix is itself exact, $\ell_{\mu\nu}=\ell_\mu-\ell_\nu$. This is the additive counterpart of the multiplicative flatness condition \eqref{eq:flatphase}.
\end{remark}


\section{Finite-dimensional representations}
\label{sec:finite}
We now study finite-dimensional matrix representations of the quantum
affine algebras introduced in Sections~\ref{sec:qdeform} and
\ref{sec:Ngenerator}. We take the local Hilbert space on site $j$ to be
$\mathcal H_j\simeq\C^d$ and represent the generators $X_{j;\alpha}^{(\mu)}$ by $d\times d$ matrices. The corresponding scalene operators $A_{12}$, $B_{13}$ and $C_{23}$ then act non-trivially on $\C^d\otimes\C^d$. An elementary determinant argument already places a strong restriction on finite-dimensional representations in which the local generators are invertible.

\begin{proposition}[Determinant obstruction]
\label{prop:det}
Let $X,Y\in\operatorname{Mat}_d(\C)$ be invertible matrices satisfying
$XY=qYX$ with $q\in\C^\times$. Then $q^d=1$.
\end{proposition}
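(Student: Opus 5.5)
The plan is to take determinants of both sides of the exchange relation $XY=qYX$. Since $X,Y\in\operatorname{Mat}_d(\C)$, multiplicativity of the determinant gives
\begin{equation}
\det(X)\det(Y)=\det(XY)=\det(qYX)=q^d\det(Y)\det(X),
\end{equation}
where the factor $q^d$ comes from $\det(qM)=q^d\det M$ for a $d\times d$ matrix $M$. The invertibility hypothesis is used at exactly one point: it guarantees $\det(X)\det(Y)\neq 0$. We may therefore cancel this common non-zero scalar and obtain $q^d=1$.

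As a cross-check, and to make the spectral content visible, an alternative route is available. Rewrite the relation as $XYX^{-1}=qY$, which requires $X$ invertible. Then $Y$ is similar to $qY$, so the multiset of eigenvalues of $Y$ is invariant under multiplication by $q$. Since $Y$ is invertible, its eigenvalues are non-zero. Multiplication by $q$ therefore permutes a finite set of non-zero eigenvalues, counted with multiplicity, and so has finite order on it. Taking the product of all eigenvalues recovers the same identity $q^d\det Y=\det Y$. This gives the same conclusion and also shows that $q$ must in fact be a root of unity whose order divides the sizes of the orbits in the spectrum.

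There is no real obstacle here. The only point needing care is to say explicitly where invertibility enters, namely in the cancellation of $\det X\det Y$. It is also worth remarking that the conclusion fails without this hypothesis: singular matrices (for instance nilpotent ones) can satisfy $XY=qYX$ for generic $q$. This is presumably why the paper goes on to consider singular realizations for generic $q$ and Heisenberg--Weyl realizations at roots of unity.
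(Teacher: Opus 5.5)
Your proof is correct and is the same as the paper's: take determinants of $XY=qYX$, use $\det(qYX)=q^d\det Y\det X$, and cancel the non-zero factor $\det X\det Y$. Your spectral cross-check is also valid. It shows that $q$-multiplication permutes the non-zero spectrum of $Y$ with multiplicities preserved, so every orbit has size equal to the order $r$ of $q$; hence $r\mid d$. The paper does not include this refinement.
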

\begin{proof}
The proof follows from the definitions. Taking determinants gives
\begin{equation*}
\det X\det Y(q^d-1) = 0.
\end{equation*}
As both determinants are non-zero, it follows that $q^d=1$.
\end{proof}
For the $N$-generator algebras of Section~\ref{sec:Ngenerator}, the
exchange parameters are $q^{\mu\nu} = \frac{\lambda_\mu}{\lambda_\nu}$.
Consequently, if all $N$ generators are represented by invertible
$d\times d$ matrices, Proposition~\ref{prop:det} implies $\left(
\frac{\lambda_\mu}{\lambda_\nu}\right)^d=1$ for every $\mu,\nu$.
Taking into account the common rescaling freedom of Theorem~\ref{thm:classification}, we can set $\lambda_1=1$ to obtain $\lambda_\mu^d=1$ for $\mu=1,\ldots,N$. Thus the $\lambda_\mu$ must be chosen from the $d$th roots of unity.
In particular, if all $\lambda_\mu$ are required to be distinct, then $N\leq d$. This observation naturally suggests local dimension $d$ for a fully distinct $d$-generator realization.

This restriction on the $\lambda$'s  applies only when both generators in a $q$-commuting pair are invertible. Singular local generators can evade this obstruction, while the resulting scalene operators can still be invertible. We consider both these types of representations, beginning with the latter first.

\subsection{Example : A two-dimensional singular realization}
\label{subsec:2dsingular}
For $q\in\C^\times$, consider the following matrices and the algebra generated by them,
\begin{equation}
N=
\begin{pmatrix}
0&1\\
0&0
\end{pmatrix}~;~
K_q=
\begin{pmatrix}
1&0\\
0&q
\end{pmatrix}~;~NK_q=qK_qN.
\label{eq:NKq}
\end{equation}
Using this we can construct the following representation of the three two-generator algebras of Section~\ref{sec:qdeform},
\begin{eqnarray}
& X_{j;\alpha}^{(1)}=N,~~X_{j;\alpha}^{(2)}=K_q~;~\alpha\in\{I,III\} & \nonumber  \\
& X_{j;II}^{(1)}=N,~~X_{j;II}^{(2)}=K_{q^{-1}} &
\label{eq:2drep}
\end{eqnarray}
The first generator in each algebra is nilpotent and singular, while the second generator is invertible. The three scalene operators become
\begin{equation}
\label{eq:2dscalene}
A_{12}=a\,N_1N_2+b\,(K_q)_1(K_{q^{-1}})_2~;~B_{13}=c\,N_1N_3+
d\,(K_q)_1(K_q)_3~;~C_{23}=e\,N_2N_3+f\,(K_{q^{-1}})_2 (K_q)_3,
\end{equation}
with explicit matrix forms given by
\begin{equation}
A=
\begin{pmatrix}
b & 0 & 0 & a\\
0 & bq^{-1} & 0 & 0\\
0 & 0 & bq & 0\\
0 & 0 & 0 & b
\end{pmatrix},
\qquad
B=
\begin{pmatrix}
d & 0 & 0 & c\\
0 & dq & 0 & 0\\
0 & 0 & dq & 0\\
0 & 0 & 0 & dq^2
\end{pmatrix},
\qquad
C=
\begin{pmatrix}
f & 0 & 0 & e\\
0 & fq & 0 & 0\\
0 & 0 & fq^{-1} & 0\\
0 & 0 & 0 & f
\end{pmatrix}.
\end{equation}
For generic non-zero values of $b$, $d$ and $f$ these three operators are invertible
even though one generator in each of the three local algebras is singular. Moreover, $q$ remains completely arbitrary. This example illustrates an important distinction: finite-dimensional invertibility of the scalene operators $A$, $B$ and $C$ does not require invertibility of every generator used in their construction.

\begin{remark}
It is also useful to verify that this singular realization is purely scalene for generic values of the deformation parameter. Defining the ordinary Yang--Baxter defect of an operator $R$ by
\begin{equation*}
\Delta_R=R_{12}R_{13}R_{23}-R_{23}R_{13}R_{12}.
\end{equation*}
For the three operators in the present realization, direct computation
gives
\begin{equation}
\Delta_A=0 \iff ab^2(q^2-1)=0~;~
\Delta_B=0 \iff cd^2(q^2-1)=0~;~
\Delta_C=0 \iff ef^2(q^2-1)=0.
\end{equation}
Consequently, when both terms in each scalene operator are present and
$q^2\neq1$, none of $A$, $B$ or $C$ satisfies the ordinary
non-braided Yang--Baxter equation individually, although the ordered
triple $(A,B,C)$ satisfies the scalene Yang--Baxter equation. The
special values $q=\pm1$ constitute exceptional points at which the
individual Yang--Baxter defects vanish.
\end{remark}

\begin{remark}
    It is not necessary to use the same generators for $\mathcal{S}_I$ and $\mathcal{S}_{III}$. For instance the generators $X^{(1)}_{j;III}= N^T$ and $X^{(2)}_{j;III} = qK_{q^{-1}}$ also satisfy the right algebraic relations needed to obtain the non-braided scalene triple. However, the scalene triple obtained using these $\mathcal{S}_{III}$ generators is equivalent to those in \eqref{eq:2dscalene} by conjugation with $\sigma^x$ on the third tensor factor. 

    Note that we have used a broader notion of equivalence here when compared to the standard YBE case. It is important that a scalene solution is an \emph{ordered} triple $(A,B,C)$, with the three operators assigned respectively to the tensor pairs $(12)$, $(13)$ and $(23)$. Thus
\begin{equation*}
A_{12}B_{13}C_{23}
=
C_{23}B_{13}A_{12}
\end{equation*}
does not in general imply that a permutation of the operators, for
example $(B,C,A)$, satisfies the corresponding scalene relation with
the tensor positions $(12),(13),(23)$ kept fixed. Accordingly, the
natural local equivalence between two ordered scalene triples is induced by independent changes of basis in the three underlying spaces:
\begin{align}
A'&=(Q_I\otimes P_{II})A(Q_I\otimes P_{II})^{-1}, \nonumber\\
B'&=(Q_I\otimes L_{III})B(Q_I\otimes L_{III})^{-1},\nonumber \\
C'&=(P_{II}\otimes L_{III})C(P_{II}\otimes L_{III})^{-1}, \nonumber
\end{align}
where $Q_I,P_{II},L_{III}$ are invertible. A permutation of the three
underlying spaces constitutes a separate relabeling operation and
must be accompanied by the corresponding induced permutation, and
possibly reversal, of the tensor legs of the operators.
\end{remark}

\subsection{Finite Weyl representations at roots of unity}
\label{subsec:weyl}
The most symmetric finite-dimensional realizations are obtained from the finite Heisenberg--Weyl algebra. Let $\omega_d = \exp\left(\frac{2\pi\mathrm{i}}{d}\right)$ and introduce the $d\times d$ clock and shift matrices
\begin{equation}
U_d|n\rangle =\omega_d^n|n\rangle~;~V_d|n\rangle=|n+1\!\!\!\pmod d\rangle~;~U_dV_d=\omega_d V_dU_d
\label{eq:clockshift}
\end{equation}
for $n=0,\ldots,d-1$. Choosing $\lambda_\mu =\omega_d^{k_\mu}$ with $k_\mu\in\mathbb Z_d$, the algebras $\mathcal S_I$ and $\mathcal S_{III}$ are generated by
\begin{equation}
X_{j;\alpha}^{(\mu)}
=
U_dV_d^{-k_\mu},
\qquad
\alpha\in\{I,III\},
\label{eq:Weylouter}
\end{equation}
while for $\mathcal S_{II}$ define
\begin{equation}
X_{j;II}^{(\mu)}
=
U_dV_d^{k_\mu}.
\label{eq:Weylmiddle}
\end{equation}
The site index $j$ indicates the copy of the local matrix algebra on
which these matrices act.

Using these relations it is easy to see that 
\begin{equation}
\left(U_dV_d^{-k_\mu}\right)
\left(U_dV_d^{-k_\nu}\right)
=
\omega_d^{k_\mu-k_\nu}
\left(U_dV_d^{-k_\nu}\right)
\left(U_dV_d^{-k_\mu}\right),
\end{equation}
which implies
\begin{equation}
X_{j;\alpha}^{(\mu)}
X_{j;\alpha}^{(\nu)}
=
\frac{\lambda_\mu}{\lambda_\nu}
X_{j;\alpha}^{(\nu)}
X_{j;\alpha}^{(\mu)},
\qquad
\alpha\in\{I,III\},
\label{eq:Weylouterrelation}
\end{equation}
and 
\begin{equation}
X_{j;II}^{(\mu)}
X_{j;II}^{(\nu)}
=
\frac{\lambda_\nu}{\lambda_\mu}
X_{j;II}^{(\nu)}
X_{j;II}^{(\mu)}.
\label{eq:Weylmiddlerelation}
\end{equation}
Thus the finite Weyl algebra gives precisely the oppositely oriented
quantum affine spaces required by Corollary~\ref{cor:Nmain}.  Every local
generator in \eqref{eq:Weylouter} and \eqref{eq:Weylmiddle} is unitary
and therefore invertible.

\subsubsection*{Three-generator representation in local dimension three}
\label{subsubsec:3gen}
The first fully distinct three-generator realization occurs for $d=N=3$ with $\omega=\exp\left(\frac{2\pi\mathrm{i}}{3}\right)$. 
The clock and shift matrices are
\begin{equation}
U_3=
\begin{pmatrix}
1&0&0\\
0&\omega&0\\
0&0&\omega^2
\end{pmatrix}~;~ V_3=
\begin{pmatrix}
0&0&1\\
1&0&0\\
0&1&0
\end{pmatrix}~;~U_3V_3=\omega V_3U_3.
\label{eq:U3V3}
\end{equation}
Choose $\lambda_1=1$, $\lambda_2=\omega$ and $\lambda_3=\omega^2$ 
implying $(k_1,k_2,k_3)=(0,1,2)$. Using this the representations of $\mathcal S_\alpha$ for $\alpha\in\{I,III\}$ are
\begin{equation}
X_{j;\alpha}^{(1)}=U_3~;~
X_{j;\alpha}^{(2)}=U_3V_3^{-1}~;~
X_{j;\alpha}^{(3)}=U_3V_3^{-2}.
\end{equation}
while the representation of $\mathcal S_{II}$ is
\begin{equation}
X_{j;II}^{(1)}=U_3~;~
X_{j;II}^{(2)}=U_3V_3~;~
X_{j;II}^{(3)}=U_3V_3^2.
\label{eq:3middlerep}
\end{equation}
All three generators are unitary, invertible and linearly independent.
The associated scalene operators
\begin{equation}
A_{12}=
\sum_{\mu=1}^{3}
a_\mu
X_{1;I}^{(\mu)}
X_{2;II}^{(\mu)}~;~
B_{13} =
\sum_{\mu=1}^{3}
b_\mu
X_{1;I}^{(\mu)}
X_{3;III}^{(\mu)}~;~
C_{23}=
\sum_{\mu=1}^{3}
c_\mu
X_{2;II}^{(\mu)}
X_{3;III}^{(\mu)}
\end{equation}
act on $\C^3\otimes\C^3\simeq\C^9$. They satisfy the scalene Yang--Baxter relation by Corollary~\ref{cor:Nmain}. Their explicit $9\times9$
matrix forms are  
\begin{equation}
\resizebox{0.60\textwidth}{!}{$
A=
\begin{pmatrix}
a_1 & 0 & 0 & 0 & 0 & a_2 & 0 & a_3 & 0 \\
0 & a_1\omega & 0 & a_2\omega & 0 & 0 & 0 & 0 & a_3\omega \\
0 & 0 & a_1\omega^2 & 0 & a_2\omega^2 & 0 & a_3\omega^2 & 0 & 0 \\
0 & a_3\omega & 0 & a_1\omega & 0 & 0 & 0 & 0 & a_2\omega \\
0 & 0 & a_3\omega^2 & 0 & a_1\omega^2 & 0 & a_2\omega^2 & 0 & 0 \\
a_3 & 0 & 0 & 0 & 0 & a_1 & 0 & a_2 & 0 \\
0 & 0 & a_2\omega^2 & 0 & a_3\omega^2 & 0 & a_1\omega^2 & 0 & 0 \\
a_2 & 0 & 0 & 0 & 0 & a_3 & 0 & a_1 & 0 \\
0 & a_2\omega & 0 & a_3\omega & 0 & 0 & 0 & 0 & a_1\omega
\end{pmatrix}.
$}
\label{eq:A9explicit}
\end{equation}
\begin{equation}
\resizebox{0.65\textwidth}{!}{$
B=
\begin{pmatrix}
b_1 & 0 & 0 & 0 & b_2 & 0 & 0 & 0 & b_3 \\
0 & b_1\omega & 0 & 0 & 0 & b_2\omega & b_3\omega & 0 & 0 \\
0 & 0 & b_1\omega^2 & b_2\omega^2 & 0 & 0 & 0 & b_3\omega^2 & 0 \\
0 & 0 & b_3\omega & b_1\omega & 0 & 0 & 0 & b_2\omega & 0 \\
b_3\omega^2 & 0 & 0 & 0 & b_1\omega^2 & 0 & 0 & 0 & b_2\omega^2 \\
0 & b_3 & 0 & 0 & 0 & b_1 & b_2 & 0 & 0 \\
0 & b_2\omega^2 & 0 & 0 & 0 & b_3\omega^2 & b_1\omega^2 & 0 & 0 \\
0 & 0 & b_2 & b_3 & 0 & 0 & 0 & b_1 & 0 \\
b_2\omega & 0 & 0 & 0 & b_3\omega & 0 & 0 & 0 & b_1\omega
\end{pmatrix}.
$}
\label{eq:B9explicit}
\end{equation}
\begin{equation}
\resizebox{0.60\textwidth}{!}{$
C=
\begin{pmatrix}
c_1 & 0 & 0 & 0 & 0 & c_3 & 0 & c_2 & 0 \\
0 & c_1\omega & 0 & c_3\omega & 0 & 0 & 0 & 0 & c_2\omega \\
0 & 0 & c_1\omega^2 & 0 & c_3\omega^2 & 0 & c_2\omega^2 & 0 & 0 \\
0 & c_2\omega & 0 & c_1\omega & 0 & 0 & 0 & 0 & c_3\omega \\
0 & 0 & c_2\omega^2 & 0 & c_1\omega^2 & 0 & c_3\omega^2 & 0 & 0 \\
c_2 & 0 & 0 & 0 & 0 & c_1 & 0 & c_3 & 0 \\
0 & 0 & c_3\omega^2 & 0 & c_2\omega^2 & 0 & c_1\omega^2 & 0 & 0 \\
c_3 & 0 & 0 & 0 & 0 & c_2 & 0 & c_1 & 0 \\
0 & c_3\omega & 0 & c_2\omega & 0 & 0 & 0 & 0 & c_1\omega
\end{pmatrix}.
$}
\label{eq:C9explicit}
\end{equation}
Furthermore, the determinants of each of the operators in the scalene triple are polynomials in the corresponding coefficient family. These
polynomials are not identically zero: for example, setting
$a_2=a_3=0$ with $a_1\neq0$ gives $A_{12}=a_1X_{1;I}^{(1)}X_{2;II}^{(1)}$
which is invertible. Thus the three $9\times9$ scalene operators are
invertible for generic values of their coefficients.

\subsubsection*{Four-generator representation in local dimension four}
\label{subsubsec:4gen}
The analogous fully distinct four-generator realization is obtained by
taking $d=N=4$ and $\omega_4 = \exp\left(\frac{2\pi\mathrm{i}}{4}\right)= \mathrm{i}$. Choose $(\lambda_1,\lambda_2,\lambda_3,\lambda_4)= (1,\mathrm{i},-1,-\mathrm{i})$. Then the clock and shift matrices can be written as
\begin{equation}
U_4=
\begin{pmatrix}
1&0&0&0\\
0&\mathrm{i}&0&0\\
0&0&-1&0\\
0&0&0&-\mathrm{i}
\end{pmatrix},
\qquad
V_4=
\begin{pmatrix}
0&0&0&1\\
1&0&0&0\\
0&1&0&0\\
0&0&1&0
\end{pmatrix},
\label{eq:U4V4}
\end{equation}
such that
\begin{equation}
U_4V_4
=
\mathrm{i}V_4U_4.
\end{equation}
Then choosing $(k_1,k_2,k_3,k_4)=(0,1,2,3)$, the representations of $\mathcal S_I$ and $\mathcal S_{III}$ become
\begin{equation}
X_{j;\alpha}^{(\mu)}
=
U_4V_4^{-k_\mu},
\qquad
\alpha\in\{I,III\},
\qquad
\mu=1,\ldots,4,
\label{eq:4outerrep}
\end{equation}
while the representation of $\mathcal{S}_{II}$ becomes,
\begin{equation}
X_{j;II}^{(\mu)}
=
U_4V_4^{k_\mu},
\qquad
\mu=1,\ldots,4.
\label{eq:4middlerep}
\end{equation}
The four matrices are unitary, invertible and linearly independent, and
all six pairwise exchange relations are represented nontrivially. The corresponding scalene operators act on $\C^4\otimes\C^4\simeq\C^{16}$. For example,
\begin{equation}
A_{12}
=
\sum_{\mu=1}^{4}
a_\mu
X_{1;I}^{(\mu)}
X_{2;II}^{(\mu)},
\label{eq:A4gen}
\end{equation}
with analogous expressions for $B_{13}$ and $C_{23}$.  By
Corollary~\ref{cor:Nmain}, this satisfies the scalene YBE.
As in the three-generator case, the determinants of the three
$16\times16$ scalene operators are non-zero polynomials in their
respective coefficients.  Since a specialization to any single
non-zero generator contribution gives an invertible tensor product,
the three scalene operators are generically invertible.

\begin{remark}
The finite-dimensional results exhibit a natural hierarchy.  If all
local generators are required to be invertible, their exchange factors
must be roots of unity.  A fully distinct $N$-generator realization
therefore requires at least $d\geq N$. For $N=2$, the choice $d=2$ and $q=-1$ gives the familiar Pauli anticommutation relation.  The cases $N=d=3$ and $N=d=4$ are its qutrit and four-state finite Heisenberg--Weyl analogues.  In this sense the original anticommuting construction is the first member of a finite-dimensional Weyl hierarchy associated with the scalene Yang--Baxter relation.
\end{remark}

\begin{remark}
The finite Weyl representations are generically purely scalene.
This is again seen by evaluating the ordinary Yang--Baxter defect of a two-site operator $R$,
\begin{equation*}
\Delta_R
=
R_{12}R_{13}R_{23}
-
R_{23}R_{13}R_{12}.
\end{equation*}
For the three-generator realization in local dimension three, and
assuming all coefficients are non-zero, direct calculation gives
\begin{equation*}
\Delta_A=0
\iff
a_1=a_2=a_3,
\qquad
\Delta_C=0
\iff
c_1=c_2=c_3,
\end{equation*}
whereas
\begin{equation*}
\Delta_B\neq0
\end{equation*}
whenever $b_1b_2b_3\neq0$. Thus for generic coefficients none of the
three scalene operators satisfies the ordinary Yang--Baxter equation.

For the four-generator realization in local dimension four, assuming
again that all coefficients are non-zero, one finds
\begin{align*}
\Delta_A=0
&\iff
a_1=a_3,\quad a_2=a_4,
\\
\Delta_B=0
&\iff
b_1=b_3,\quad b_2=b_4,
\\
\Delta_C=0
&\iff
c_1=c_3,\quad c_2=c_4.
\end{align*}
Consequently, the four-generator construction is also purely scalene
for generic values of its coefficients, although special
lower-dimensional parameter subspaces exist on which the individual
operators become ordinary Yang--Baxter solutions.
\end{remark}

\section{Infinite-dimensional representations}
\label{sec:infinite}

The finite-dimensional representations discussed in
Section~\ref{sec:finite} impose root-of-unity restrictions on the
deformation parameters whenever all the local generators are required
to be invertible.  These restrictions disappear in infinite-dimensional
representations.  In fact, the multiparameter quantum affine algebras
of Section~\ref{sec:Ngenerator} admit a particularly simple realization
in terms of weighted-shift operators for completely arbitrary non-zero
values of the parameters $\lambda_\mu$.

\subsection{Bilateral weighted-shift representations}
\label{subsec:weightedshifts}

We begin by explaining the space on which the infinite-dimensional
representation acts and the meaning of a bilateral weighted-shift
operator. Let
\begin{equation}
\mathcal D
=
c_{00}(\mathbb Z)
=
\operatorname{span}_{\C}
\left\{
|n\rangle:n\in\mathbb Z
\right\}
\label{eq:Dspace}
\end{equation}
denote the vector space of finite-support sequences indexed by the
integers\footnote{The notation $c_{00}(\mathbb Z)$ denotes the vector space of finitely supported complex sequences indexed by $\mathbb Z$,
\begin{equation*}
c_{00}(\mathbb Z) = \left\{(\psi_n)_{n\in\mathbb Z}: \psi_n\in\C,\; \psi_n=0
\text{ for all but finitely many }n
\right\}.
\end{equation*}
Equivalently, $c_{00}(\mathbb Z) = \bigoplus_{n\in\mathbb Z}\C\,|n\rangle$, so every vector has the form $|\psi\rangle =
\sum_{n\in\mathbb Z}\psi_n|n\rangle$ with only finitely many non-zero coefficients $\psi_n$. The support of
such a sequence, $\operatorname{supp}\psi = \{n\in\mathbb Z:\psi_n\neq0\}$, is therefore finite. The space $c_{00}(\mathbb Z)$ is a dense subspace of $\ell^2(\mathbb Z)$ and provides a convenient common algebraic domain for the weighted-shift operators considered here, including the case in which they are unbounded on $\ell^2(\mathbb Z)$. See \cite{Vaidyanathan2023FunctionalAnalysis} for more details.}. Explicitly, an element of $\mathcal D$ has the form
\begin{equation}
|\psi\rangle
=
\sum_{n\in\mathbb Z}\psi_n|n\rangle,
\label{eq:finite_support_vector}
\end{equation}
where only finitely many coefficients $\psi_n$ are non-zero.  Thus,
although the label $n$ ranges over all of $\mathbb Z$, every individual
vector in $\mathcal D$ is a finite linear combination of basis vectors. The finite-support condition is useful because it provides a common algebraic domain on which all operators introduced below, as well as their products, are well defined. In particular, even when the
weights of a shift operator become arbitrarily large as $n\rightarrow\pm\infty$, the action of the operator on a vector in
$\mathcal D$ still produces only a finite linear combination of basis
vectors. The space $\mathcal D$ may subsequently be regarded as a
dense subspace of the Hilbert space $\ell^2(\mathbb Z)$ whenever
Hilbert-space properties are required.

The basis $\{|n\rangle:n\in\mathbb Z\}$ resembles the number-state
basis of a harmonic oscillator, but there is an important difference. For the usual bosonic Fock space one has
\begin{equation}
\mathsf N|n\rangle=n|n\rangle,
\qquad
n\in\mathbb N_0,
\end{equation}
so that the spectrum of the number operator is bounded from below and
$|0\rangle$ is a distinguished vacuum state. By contrast, in
\eqref{eq:Dspace} the integer label is unbounded in both directions, $n\in\mathbb Z$, and there is no lowest-weight state. The representation is therefore Fock-space-like, but it is not the usual bosonic Fock representation.

This distinction can also be expressed directly in oscillator
language. For the ordinary bosonic oscillator we have $[a,a^\dagger]=1$, and $\mathsf N=a^\dagger a$. Then the normalized creation operator $E^\dagger = a^\dagger(\mathsf N+1)^{-1/2}$ acts as a unilateral shift
\begin{equation}
E^\dagger|n\rangle=|n+1\rangle,
\qquad
n\in\mathbb N_0,
\end{equation}
moving $n$ in the set of natural numbers alone. However, such an operator is not invertible, because the Fock space possesses the lower boundary $|0\rangle$. A bilateral analogue can be defined by introducing an integer-valued number operator $\mathsf N$ and a shift operator $S$ satisfying
\begin{equation}
\mathsf N|n\rangle=n|n\rangle,
\qquad
S|n\rangle=|n+1\rangle,
\qquad
n\in\mathbb Z.
\label{eq:NSaction}
\end{equation}
Since there is no lowest-weight state, $S$ now has an inverse given by
\begin{equation}
S^{-1}|n\rangle=|n-1\rangle.
\end{equation}
Using these definitions it is easily seen that
\begin{equation}
[\mathsf N,S]=S.
\label{eq:NSalgebra}
\end{equation}
Thus the algebra relevant to the bilateral representation is more
naturally viewed as a number--shift, or quantum-rotor algebra \cite{OhnukiKitakado1993,CarruthersNieto1968} rather
than the ordinary harmonic-oscillator algebra.

The next ingredient in constructing this representation is the \emph{weighted-shift operator} $W$. It is defined by the action on the
above basis,
\begin{equation}
W|n\rangle = w_n|n+1\rangle;~~\{w_n:n\in\mathbb Z\}.
\label{eq:generalweightedshift}
\end{equation}
Here $w_n$ are called the sequence of weights. The shift is called
\emph{bilateral} because the basis is indexed by all integers,
$n\in\mathbb Z$, so that the shift extends in both directions along
the integer lattice. This is to be contrasted with the unilateral
shift of the usual Fock representation, where $n\in\mathbb N_0$ and a boundary is present at $n=0$.

If $w_n\neq0$ for every $n\in\mathbb Z$, then the bilateral weighted shift is algebraically invertible on $\mathcal D$, with
\begin{equation}
W^{-1}|n\rangle = \frac{1}{w_{n-1}}|n-1\rangle.
\label{eq:generalweightedinverse}
\end{equation}
The absence of a lower boundary and the resulting possibility of
invertibility are particularly useful for the present construction.

Next we introduce the weighted shifts required for the quantum affine
algebras. For every $\lambda\in\C^\times$, define the diagonal
operator
\begin{equation}
D_\lambda|n\rangle
=
\lambda^n|n\rangle.
\label{eq:Dlambda}
\end{equation}
Equivalently, in terms of the integer-valued number operator, $D_\lambda=\lambda^{\mathsf N}$. The relation \eqref{eq:NSalgebra} implies
\begin{equation}
D_\lambda S = \lambda S D_\lambda.
\label{eq:DSrelation}
\end{equation}
At lattice site $j$, let $\mathcal D_j$ be a copy of
$c_{00}(\mathbb Z)$, with basis $\left\{ |n\rangle_j:n\in\mathbb Z \right\}$. We define
\begin{equation}
G_{\lambda,j}
=
S_jD_{\lambda,j}
=
S_j\lambda^{\mathsf N_j}.
\label{eq:GSD}
\end{equation}
Its action is
\begin{equation}
G_{\lambda,j}|n\rangle_j
=
\lambda^n|n+1\rangle_j.
\label{eq:Gdef}
\end{equation}
Thus $G_{\lambda,j}$ is a bilateral weighted shift with weight
sequence $w_n=\lambda^n$. Then for any $\lambda,\mu\in\C^\times$, the above action implies
\begin{equation}
G_{\lambda,j}G_{\mu,j}=\frac{\lambda}{\mu}G_{\mu,j}G_{\lambda,j}.
\label{eq:Grelation}
\end{equation}
Then the $N$-parameter quantum affine algebras $\mathcal S_I$ and $\mathcal S_{III}$ of Corollary~\ref{cor:Nmain} are therefore represented by
\begin{equation}
X_{j;\alpha}^{(\mu)} = G_{\lambda_\mu,j}, \qquad
\alpha\in\{I,III\}, \qquad
\mu=1,\ldots,N.
\label{eq:Gouter}
\end{equation}
Using \eqref{eq:Grelation} we find
\begin{equation}
X_{j;\alpha}^{(\mu)}X_{j;\alpha}^{(\nu)}= \frac{\lambda_\mu}{\lambda_\nu}X_{j;\alpha}^{(\nu)}X_{j;\alpha}^{(\mu)},
\qquad \alpha\in\{I,III\},
\label{eq:Gouterrelation}
\end{equation}
as required. To realize the oppositely oriented algebra $\mathcal S_{II}$, define a second family of bilateral weighted shifts by
\begin{equation}
H_{\lambda,j}=S_jD_{\lambda^{-1},j}=S_j\lambda^{-\mathsf N_j},
\label{eq:HSD}
\end{equation}
with the action
\begin{equation}
H_{\lambda,j}|n\rangle_j = \lambda^{-n}|n+1\rangle_j.
\label{eq:Hdef}
\end{equation}
The corresponding sequence of weights is therefore $\widetilde w_n=\lambda^{-n}$. A direct calculation gives
\begin{equation}
H_{\lambda,j}H_{\mu,j} =\frac{\mu}{\lambda} H_{\mu,j}H_{\lambda,j}.
\label{eq:Hrelation}
\end{equation}
Thus a representation of the $N$-parameter $\mathcal S_{II}$ is obtained by setting
\begin{equation}
X_{j;II}^{(\mu)} = H_{\lambda_\mu,j}, \qquad \mu=1,\ldots,N,
\label{eq:Hmiddle}
\end{equation}
for which
\begin{equation}
X_{j;II}^{(\mu)} X_{j;II}^{(\nu)}=\frac{\lambda_\nu}{\lambda_\mu}
X_{j;II}^{(\nu)} X_{j;II}^{(\mu)}.
\label{eq:Hmiddlerelation}
\end{equation}
Since $\lambda\neq0$, none of the weights of either
$G_{\lambda,j}$ or $H_{\lambda,j}$ vanishes.  Both families are
therefore algebraically invertible on $\mathcal D_j$. Their inverses
act as
\begin{equation}
G_{\lambda,j}^{-1}|n\rangle_j
=
\lambda^{-(n-1)}|n-1\rangle_j~;~H_{\lambda,j}^{-1}|n\rangle_j
=
\lambda^{n-1}|n-1\rangle_j.
\label{eq:Ginverse}
\end{equation}
The distinction between the algebraic space $\mathcal D_j$ and its
Hilbert-space completion is important. On $\mathcal D_j$, all the
relations and inverse formulas above are valid for every
$\lambda\in\C^\times$. If $|\lambda|=1$, the weights have unit modulus,
$|\lambda^n|=1$,and both $G_{\lambda,j}$ and $H_{\lambda,j}$ extend to unitary operators on $\ell^2(\mathbb Z)$. If on the other hand, $|\lambda|\neq1$, then the weights grow without bound in one direction of the integer lattice, so the corresponding operators are generally unbounded on $\ell^2(\mathbb Z)$. Nevertheless, $\mathcal D_j$ is invariant under $G_{\lambda,j}$, $H_{\lambda,j}$ and their inverses, and therefore provides a natural common dense algebraic domain on which all the quantum-affine relations are exact.

\begin{theorem}[Infinite-dimensional realization of the $N$-generator scalene triple]
\label{thm:weighted}
Let $\lambda_1,\ldots,\lambda_N\in\C^\times$ be arbitrary complex numbers. On the three lattice sites appearing in the scalene
relation, represent the generators by $X_{1;I}^{(\mu)} = G_{\lambda_\mu,1}$, $X_{2;II}^{(\mu)} = H_{\lambda_\mu,2}$, $X_{3;III}^{(\mu)} = G_{\lambda_\mu,3}$. Then the scalene operators
\begin{eqnarray}
A_{12} =
\sum_{\mu=1}^{N}
a_\mu\,
G_{\lambda_\mu,1}
H_{\lambda_\mu,2}~;~ B_{13} =
\sum_{\mu=1}^{N}
b_\mu\,
G_{\lambda_\mu,1}
G_{\lambda_\mu,3}~;~C_{23} =
\sum_{\mu=1}^{N}
c_\mu\,
H_{\lambda_\mu,2}
G_{\lambda_\mu,3}
\label{eq:weightedA}
\end{eqnarray}
satisfy the scalene YBE $A_{12}B_{13}C_{23} =
C_{23}B_{13}A_{12}$, on the common invariant algebraic domain $\mathcal D_1\otimes\mathcal D_2\otimes\mathcal D_3$,
for arbitrary coefficient families $\{a_\mu\}$, $\{b_\mu\}$ and $\{c_\mu\}$.
\end{theorem}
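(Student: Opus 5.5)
The proof reduces to the phase-cancellation mechanism behind Theorem~\ref{thm:phasecriterion}, applied at the operator level on $\mathcal D_1\otimes\mathcal D_2\otimes\mathcal D_3$. Corollary~\ref{cor:Nmain} is not invoked directly, since its "if" direction needs no PBW hypothesis anyway. First I would check that every product appearing is well defined. Each $G_{\lambda,j}$ and $H_{\lambda,j}$ maps $\mathcal D_j$ into itself, sending a finite combination of basis vectors to another one. Hence the tensor products $G_{\lambda_\mu,1}H_{\lambda_\mu,2}$ and so on, acting as the identity on the remaining factor, preserve $\mathcal D_1\otimes\mathcal D_2\otimes\mathcal D_3$. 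Operators on distinct sites commute on this domain because they act on different tensor factors. So $A_{12},B_{13},C_{23}$ and all their products are well-defined linear maps of the algebraic tensor product.

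Next I would expand both sides as finite sums over $(\mu,\nu,\rho)$, exactly as in \eqref{eq:Nlhs} and \eqref{eq:Nrhs0}, regrouping site by site using commutativity across sites. The left side has local words $G_{\lambda_\mu}G_{\lambda_\nu}$ at site 1, $H_{\lambda_\mu}H_{\lambda_\rho}$ at site 2 and $G_{\lambda_\nu}G_{\lambda_\rho}$ at site 3. The right side has the reversed words. Applying \eqref{eq:Grelation} and \eqref{eq:Hrelation} to the reversed words produces, term by term, the factor
\[
\frac{\lambda_\nu}{\lambda_\mu}\cdot\frac{\lambda_\mu}{\lambda_\rho}\cdot\frac{\lambda_\rho}{\lambda_\nu}=1
\]
times the left-hand word. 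Equivalently, the exchange parameters $q_I^{\mu\nu}=\lambda_\mu/\lambda_\nu$, $q_{II}^{\mu\rho}=\lambda_\rho/\lambda_\mu$, $q_{III}^{\nu\rho}=\lambda_\nu/\lambda_\rho$ satisfy \eqref{eq:phasecriterion}. Every $(\mu,\nu,\rho)$ summand therefore agrees identically, with no linear independence needed, and the identity holds for arbitrary coefficients.

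The only real check, and the step I expect to need the most care, is verifying \eqref{eq:Grelation} and \eqref{eq:Hrelation} on basis vectors with the correct orientation. For instance,
\[
G_\lambda G_\mu|n\rangle=\mu^n\lambda^{n+1}|n+2\rangle, \qquad G_\mu G_\lambda|n\rangle=\lambda^n\mu^{n+1}|n+2\rangle,
\]
whose ratio is $\lambda/\mu$. The $H$ case follows by replacing $\lambda\to\lambda^{-1}$. Because these relations hold exactly on each $\mathcal D_j$ for all $\lambda\in\C^\times$, no boundedness or root-of-unity restriction enters. Alternatively, one could compute both sides directly on $|n_1,n_2,n_3\rangle$, where each summand yields a monomial in the $\lambda$'s times $|n_1+2,n_2+2,n_3+2\rangle$, and compare exponents.
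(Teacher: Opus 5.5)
Your proof is correct and follows the paper's route. The paper checks that \eqref{eq:Grelation} and \eqref{eq:Hrelation} reproduce \eqref{eq:outeralg} and \eqref{eq:middlealg} and then invokes Corollary~\ref{cor:Nmain}, whose sufficiency direction is exactly the term-by-term phase cancellation you carry out on $\mathcal D_1\otimes\mathcal D_2\otimes\mathcal D_3$. Your observation that this direction needs no PBW or linear-independence hypothesis is accurate and worth keeping, since the weighted-shift representation need not be faithful (for example, $G_{2}^2=2\,G_{1}G_{4}$).
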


\begin{proof}
Equations~\eqref{eq:Grelation} and \eqref{eq:Hrelation} reproduce
precisely the defining relations \eqref{eq:outeralg} and
\eqref{eq:middlealg} of the quantum affine algebras
$\mathcal S_I$, $\mathcal S_{II}$ and $\mathcal S_{III}$.
The result therefore follows directly from
Corollary~\ref{cor:Nmain}.
\end{proof}
Unlike the fully invertible finite-dimensional representations of
Section~\ref{sec:finite}, the construction of
Theorem~\ref{thm:weighted} imposes no root-of-unity condition on the
ratios $\frac{\lambda_\mu}{\lambda_\nu}$. Thus the full $(N-1)$-parameter family of quantum affine algebras selected by the scalene Yang--Baxter relation admits an explicit infinite-dimensional representation.

\subsection{Multiplicative-shift and $q$-difference realizations on function spaces}
\label{subsec:qdifference}
The bilateral weighted-shift representation introduced in
Subsection~\ref{subsec:weightedshifts} has an equivalent realization on
a space of functions. The natural algebraic space for this purpose is
the space of Laurent polynomials
\begin{equation}
\C[x,x^{-1}]
=
\left\{
\sum_{n=n_-}^{n_+} c_n x^n
\;\middle|\;
n_-,n_+\in\mathbb Z,\;
c_n\in\C
\right\}.
\end{equation}
The correspondence $|n\rangle\longleftrightarrow x^n$, identifies the basis of the bilateral representation with the Laurent
monomial basis. The use of Laurent polynomials, rather than ordinary
polynomials, is essential here because both positive and negative
integer powers are required in order to reproduce the bilateral basis
indexed by $n\in\mathbb Z$.

Let $M$ denote multiplication by the coordinate $x$, and let
$T_\lambda$, for $\lambda\in\C^\times$, denote the multiplicative-shift
or dilation operator
\begin{equation}
(Mf)(x)=xf(x)~;~(T_\lambda f)(x)=f(\lambda x).
\label{eq:MTdef}
\end{equation}
Their noncommutativity follows immediately from their action on an
arbitrary Laurent polynomial $f$. We have,
\begin{equation*}
(T_\lambda Mf)(x) =(Mf)(\lambda x) = \lambda x f(\lambda x)~;~(MT_\lambda f)(x) = x f(\lambda x),
\end{equation*}
implying
\begin{equation}
T_\lambda M = \lambda M T_\lambda.
\label{eq:TMrelation}
\end{equation}
Thus $T_\lambda$ and $M$ themselves furnish the standard quantum-plane
relation with deformation parameter $\lambda$. Now defining
\begin{equation}
G_\lambda = MT_\lambda,
\label{eq:MTG}
\end{equation}
we find that $G_\lambda G_\eta = \frac{\lambda}{\eta}G_\eta G_\lambda$, the required algebra. This proof uses the fact that the dilations form an Abelian group.
Thus these operators generate the quantum affine algebras $\mathcal S_I$ and $\mathcal S_{III}$.


At this point we can also see the equivalence of this representation with the bilateral weighted-shift representation. This is seen through the action on a Laurent monomial. Since $T_\lambda x^n = (\lambda x)^n =\lambda^n x^n$, one has $G_\lambda x^n =MT_\lambda x^n=\lambda^n x^{n+1}.$ Under the correspondence $|n\rangle\longleftrightarrow x^n$, this is identical to $G_\lambda|n\rangle = \lambda^n|n+1\rangle$ in \eqref{eq:Gdef}.

In a similar manner we can obtain the representation for $\mathcal S_{II}$ by defining
\begin{equation}
H_\lambda
=
MT_{\lambda^{-1}}.
\label{eq:MTH}
\end{equation}
With the action $H_\lambda x^n = \lambda^{-n}x^{n+1}$ on a Laurent monomial, they satisfy $H_\lambda H_\eta = \frac{\eta}{\lambda} H_\eta H_\lambda$ as required. The relation of this representation to $q$-difference calculus is shown in Appendix \ref{app:q-calculus}.


This function space representation can be more directly related to the number basis representation of Sec. \ref{subsec:weightedshifts}. The Euler operator $x\partial_x$, acts diagonally on Laurent monomials,
$x\partial_x\,x^n = n x^n$ and therefore plays the role of the integer-valued number operator $\mathsf N$. Consequently the dilation operator can be written as $T_\lambda = \lambda^{x\partial_x}$ or as $T_\lambda =\exp\left[(\log\lambda)x\partial_x\right]$. On Laurent polynomials this expression is independent of the choice of branch of $\log\lambda$, since the eigenvalues of $x\partial_x$ are
integers. From this it follows that
\begin{eqnarray}
& G_\lambda = x\,\lambda^{x\partial_x} = x\, \exp\left[(\log\lambda)x\partial_x \right], & \nonumber \\
& H_\lambda = x\,\lambda^{-x\partial_x} = x\, \exp\left[ -(\log\lambda)x\partial_x\right]. &
\label{eq:Hdilation}
\end{eqnarray}
The correspondence between the two infinite-dimensional realizations
can therefore be summarized as
\begin{equation}
|n\rangle
\longleftrightarrow
x^n,
\qquad
\mathsf N
\longleftrightarrow
x\partial_x,
\qquad
S
\longleftrightarrow
M,
\label{eq:representationcorrespondence}
\end{equation}
under which
\begin{equation}
S\lambda^{\mathsf N}
\longleftrightarrow
M\lambda^{x\partial_x}.
\end{equation}
Thus the bilateral weighted-shift and function-space constructions are
two equivalent realizations of the same quantum affine algebraic
structure.  Operator representations of quantum planes and their
localizations have been studied extensively; see, for example,
Ref.~\cite{Schmudgen2002}.

\begin{remark}[Generic scalene character of the infinite-dimensional realization]
The infinite-dimensional weighted-shift realization is also purely
scalene for generic parameters. Let
\begin{equation*}
\mathcal A(k)=\sum_{\mu=1}^N a_\mu\lambda_\mu^k,
\qquad
\mathcal B(k)=\sum_{\mu=1}^N b_\mu\lambda_\mu^k,
\qquad
\mathcal C(k)=\sum_{\mu=1}^N c_\mu\lambda_\mu^k.
\end{equation*}
For the ordinary Yang--Baxter defect
\begin{equation*}
\Delta_R
=
R_{12}R_{13}R_{23}
-
R_{23}R_{13}R_{12},
\end{equation*}
one finds on a basis state $|m,n,\ell\rangle$
\begin{align}
\Delta_A|m,n,\ell\rangle
={}&
\mathcal A(n-\ell)\mathcal A(m-n)
\left[
\mathcal A(m-\ell-1)
-
\mathcal A(m-\ell+1)
\right]
\nonumber\\
&\times|m+2,n+2,\ell+2\rangle,
\\
\Delta_B|m,n,\ell\rangle
={}&
\mathcal B(m+\ell+1)
\Big[
\mathcal B(n+\ell)\mathcal B(m+n+2)
\nonumber\\
&\hspace{25mm}
-
\mathcal B(m+n)\mathcal B(n+\ell+2)
\Big]
|m+2,n+2,\ell+2\rangle,
\\
\Delta_C|m,n,\ell\rangle
={}&
\mathcal C(\ell-n)\mathcal C(n-m)
\left[
\mathcal C(\ell-m+1)
-
\mathcal C(\ell-m-1)
\right]
\nonumber\\
&\times|m+2,n+2,\ell+2\rangle.
\end{align}
Consequently, for generically non-vanishing weights, the individual
Yang--Baxter equations require
\begin{equation}
\mathcal A(k+2)=\mathcal A(k),
\qquad
\mathcal C(k+2)=\mathcal C(k),
\end{equation}
and
\begin{equation}
\mathcal B(k+2)=\rho\,\mathcal B(k)
\end{equation}
for some constant $\rho$.  For distinct generic $\lambda_\mu$, these
conditions require respectively $\lambda_\mu^2=1$ for the active terms
of $\mathcal A$ and $\mathcal C$, and a common value of
$\lambda_\mu^2$ for the active terms of $\mathcal B$.  These are
non-generic restrictions.  Thus, although the mixed scalene
Yang--Baxter relation holds for arbitrary
$\lambda_\mu\in\C^\times$, the three constituent operators fail the
ordinary Yang--Baxter equation for generic deformation parameters.
\end{remark}

\subsection{Example : A four-generator realization}
\label{subsec:4geninfinite}
As a concrete illustration, consider the four-generator case $N=4$,
with $\lambda_1=1$, $\lambda_2=p$, $\lambda_3=r$, and $\lambda_4=s$,
where  $p,r,s\in\C^\times$. Then the generators of $\mathcal S_I$ and $\mathcal S_{III}$ are represented by the four bilateral weighted shifts $G_{1,j}$, $G_{p,j}$, $G_{r,j}$, and $G_{s,j}$, with respective weights $1, p^n, r^n, s^n$, and $n\in\mathbb Z$. Similar to this, the generators of $\mathcal S_{II}$ are represented by $H_{1,j}$, $H_{p,j}$, $H_{r,j}$, and $H_{s,j}$, with the associated weights
$1, p^{-n},r^{-n}, s^{-n}$. For generic pairwise distinct values of 1,$p$,$r$,$s$, all six pairwise exchange relations are represented nontrivially. For example,
\begin{equation*}
G_{p,j}G_{r,j}
=
\frac{p}{r}
G_{r,j}G_{p,j},
\qquad
G_{p,j}G_{s,j}
=
\frac{p}{s}
G_{s,j}G_{p,j},
\qquad
G_{r,j}G_{s,j}
=
\frac{r}{s}
G_{s,j}G_{r,j},
\label{eq:4infrelationsG}
\end{equation*}
whereas
\begin{equation*}
H_{p,j}H_{r,j}
=
\frac{r}{p}
H_{r,j}H_{p,j},
\qquad
H_{p,j}H_{s,j}
=
\frac{s}{p}
H_{s,j}H_{p,j},
\qquad
H_{r,j}H_{s,j}
=
\frac{s}{r}
H_{s,j}H_{r,j}.
\label{eq:4infrelationsH}
\end{equation*}

The corresponding scalene operators have a particularly simple action
on the tensor-product bases.  For $A_{12}$, define
\begin{equation}
\mathcal A(k)
=
a_1+a_2p^k+a_3r^k+a_4s^k.
\label{eq:Aweight}
\end{equation}
Using \eqref{eq:weightedA}, one obtains
\begin{equation}
A_{12}|m,n\rangle
=
\mathcal A(m-n)
|m+1,n+1\rangle.
\label{eq:Aaction}
\end{equation}
Here the two basis labels refer to sites $1$ and $2$.

For $B_{13}$, define
\begin{equation}
\mathcal B(k)
=
b_1+b_2p^k+b_3r^k+b_4s^k.
\label{eq:Bweight}
\end{equation}
Then
\begin{equation}
B_{13}|m,n\rangle
=
\mathcal B(m+n)
|m+1,n+1\rangle,
\label{eq:Baction}
\end{equation}
where the two basis labels refer to sites $1$ and $3$.

Similarly, define
\begin{equation}
\mathcal C(k)
=
c_1+c_2p^k+c_3r^k+c_4s^k.
\label{eq:Cweight}
\end{equation}
The action of $C_{23}$ is
\begin{equation}
C_{23}|m,n\rangle
=
\mathcal C(n-m)
|m+1,n+1\rangle,
\label{eq:Caction}
\end{equation}
where the two basis labels refer to sites $2$ and $3$.

Thus each of the three scalene operators is itself a bilateral weighted
shift on the corresponding two-site integer lattice, directed along the
diagonal translation
\begin{equation}
(m,n)\longmapsto(m+1,n+1).
\end{equation}
The scalar functions $\mathcal A(m-n)$, $\mathcal B(m+n)$ and $\mathcal C(n-m)$ play the role of the corresponding weight sequences.
Their algebraic invertibility reduces to the non-vanishing of these
weights. For example, $\mathcal A(k)\neq0$ for every $k\in\mathbb Z$,
is sufficient for $A_{12}$ to be algebraically invertible. Since
$m-n$ is unchanged by the simultaneous translation
$(m,n)\mapsto(m+1,n+1)$, its inverse is
\begin{equation}
A_{12}^{-1}|m,n\rangle
=
\frac{1}{\mathcal A(m-n)}
|m-1,n-1\rangle.
\label{eq:Ainverse}
\end{equation}
Analogous non-vanishing conditions apply to $B_{13}$ and $C_{23}$.

If $|p|=|r|=|s|=1$, then all the local generators, $G$'s and $H$'s,
are unitary on $\ell^2(\mathbb Z)$. Moreover, the sufficient condition $|a_1| > |a_2|+|a_3|+|a_4|$ implies $|\mathcal A(k)|\geq |a_1|-|a_2|-|a_3|-|a_4| >0$, for every $k\in\mathbb Z$. Hence $A_{12}$ possesses a bounded inverse on $\ell^2(\mathbb Z)\otimes\ell^2(\mathbb Z)$.  Analogous sufficient conditions hold for $B_{13}$ and $C_{23}$.

This four-generator example illustrates a principal advantage of the
infinite-dimensional representation.  The deformation parameters
$p,r,s$ may be completely generic, all local generators are
algebraically invertible on the bilateral domain, every pairwise
exchange relation is represented nontrivially, and unitary
representations are available whenever the parameters $\lambda_\mu$
lie on the unit circle.

\section{Conclusions}
\label{sec:conclusions}
In this work we have constructed a broad class of solutions of the
non-braided scalene Yang--Baxter equation from quadratic
noncommutative algebras. Beginning with two anticommuting generators,
we obtained a representation-independent construction which admits a
continuous deformation in terms of quantum-plane relations. The
general $N$-generator problem leads naturally to multiparameter
quantum affine spaces, with the scalene relation fixing their exchange
parameters to the ratio form
\begin{equation}
q^{\mu\nu}
=
\frac{\lambda_\mu}{\lambda_\nu},
\end{equation}
up to a common rescaling of the parameters $\lambda_\mu$.

A significant feature of the solutions obtained here is that they are
\emph{purely scalene}.  In all the explicit finite-dimensional
realizations considered in this paper, as well as in the
infinite-dimensional weighted-shift realization, the individual
operators $A$, $B$ and $C$ fail, for generic values of their parameters,
to satisfy the ordinary non-braided Yang--Baxter equation, while the
ordered triple $(A,B,C)$ satisfies the scalene Yang--Baxter relation.
The consistency of these solutions is therefore genuinely associated
with the mixed scalene equation rather than with ordinary
Yang--Baxter operators used separately on the three tensor pairs.

We have also exhibited several concrete representations of the
underlying quantum affine algebras.  Finite-dimensional invertible
representations naturally lead to root-of-unity restrictions and can be
realized using finite Heisenberg--Weyl matrices.  By contrast, the
bilateral weighted-shift representation on $c_{00}(\mathbb Z)$ removes
the root-of-unity restriction and realizes arbitrary non-zero
deformation parameters.  Its equivalent realization on Laurent
polynomials connects the construction with the multiplicative shifts
underlying $q$-difference calculus.

The scalene relation also admits a simple mixed-$RLL$ interpretation.
Upon identifying
\begin{equation}
A_{ab}=R_{ab},
\qquad
B_{aj}=L^{(B)}_{aj},
\qquad
C_{bj}=L^{(C)}_{bj},
\end{equation}
it takes the form
\begin{equation}
R_{ab}L^{(B)}_{aj}L^{(C)}_{bj}
=
L^{(C)}_{bj}L^{(B)}_{aj}R_{ab}.
\end{equation}
Repeating this relation along a chain gives the corresponding relation
between the two monodromy operators.  Whenever the intertwiner is
invertible and the auxiliary traces are well defined, the associated
transfer operators satisfy the cross-commutativity relation
\begin{equation}
[t_B,t_C]=0.
\end{equation}
This is weaker than the usual self-commutativity condition of the
standard quantum inverse-scattering method, but it provides a natural
starting point for investigating the spectral and integrability
properties of scalene systems.

An important direction for future work is therefore to determine when
the purely scalene solutions constructed here can be used to generate
non-trivial integrable lattice models.  In particular, it would be
interesting to understand what additional algebraic or spectral
conditions, beyond cross-commutativity, are sufficient to produce
self-commuting transfer-matrix families and towers of independent
conserved quantities.  The examples obtained in the present work
provide explicit algebraic data with which these questions can be
studied.

\appendix
\section{Direct verification of the two-generator anticommuting solution}
\label{app:twogenerator}
For completeness, we give a direct verification of
Theorem~\ref{thm:anticomm}.  To simplify the notation within this
appendix, we introduce the shorthand
\begin{equation}
(x_I,y_I)
\equiv
\left(X_{1;I}^{(1)},X_{1;I}^{(2)}\right)~;~
(x_{II},y_{II})
\equiv
\left(X_{2;II}^{(1)},X_{2;II}^{(2)}\right)~;~
(x_{III},y_{III})
\equiv
\left(X_{3;III}^{(1)},X_{3;III}^{(2)}\right).
\end{equation}
The defining relations are
\begin{equation}
x_\alpha y_\alpha
=
-y_\alpha x_\alpha,
\qquad
\alpha\in\{I,II,III\},
\label{eq:appendixanti}
\end{equation}
while generators belonging to different lattice sites commute.

In this notation the three scalene operators are
\begin{equation}
A_{12}=a\,x_Ix_{II}+b\,y_Iy_{II}~;~
B_{13}=c\,x_Ix_{III}+d\,y_Iy_{III}~;~
C_{23}=e\,x_{II}x_{III}+f\,y_{II}y_{III}.
\end{equation}
Expanding the product $A_{12}B_{13}C_{23}$ and grouping together
operators acting on the same site gives
\begin{align}
A_{12}B_{13}C_{23}
={}&
ace\,
x_I^2x_{II}^2x_{III}^2
+
acf\,
x_I^2x_{II}y_{II}x_{III}y_{III}
\nonumber\\
&-
ade\,
x_Iy_Ix_{II}^2x_{III}y_{III}
+
adf\,
x_Iy_Ix_{II}y_{II}y_{III}^2
\nonumber\\
&+
bce\,
x_Iy_Ix_{II}y_{II}x_{III}^2
-
bcf\,
x_Iy_Iy_{II}^2x_{III}y_{III}
\nonumber\\
&+
bde\,
y_I^2x_{II}y_{II}x_{III}y_{III}
+
bdf\,
y_I^2y_{II}^2y_{III}^2.
\label{eq:antiexpansion}
\end{align}

Consider now the reverse product
$C_{23}B_{13}A_{12}$.  After grouping the factors site by site,
some of the local generator pairs occur in the reversed order.  For
example, the $acf$ contribution is initially
\begin{equation}
acf\,
x_I^2
\left(y_{II}x_{II}\right)
\left(y_{III}x_{III}\right).
\end{equation}
Using \eqref{eq:appendixanti} at sites $II$ and $III$ gives
\begin{equation}
\left(y_{II}x_{II}\right)
\left(y_{III}x_{III}\right)
=
\left(-x_{II}y_{II}\right)
\left(-x_{III}y_{III}\right),
\end{equation}
so that the two minus signs cancel and the contribution becomes
\begin{equation}
acf\,
x_I^2x_{II}y_{II}x_{III}y_{III},
\end{equation}
which agrees with the corresponding term in
\eqref{eq:antiexpansion}.

Similarly, the $adf$ contribution to the reverse product contains
\begin{equation}
adf\,
\left(y_Ix_I\right)
\left(y_{II}x_{II}\right)
y_{III}^2.
\end{equation}
Anticommuting the generators at sites $I$ and $II$ gives
\begin{equation}
\left(y_Ix_I\right)
\left(y_{II}x_{II}\right)
=
\left(-x_Iy_I\right)
\left(-x_{II}y_{II}\right),
\end{equation}
and hence
\begin{equation}
adf\,
x_Iy_Ix_{II}y_{II}y_{III}^2,
\end{equation}
again reproducing the corresponding term in
\eqref{eq:antiexpansion}.

The remaining coefficient sectors behave in the same way.  Whenever
the order of a pair of generators is reversed in passing from
$A_{12}B_{13}C_{23}$ to $C_{23}B_{13}A_{12}$, the reversal occurs at
an even number of sites.  The resulting anticommutation signs therefore
cancel pairwise.  Consequently,
\begin{equation}
C_{23}B_{13}A_{12}
=
A_{12}B_{13}C_{23},
\end{equation}
which proves the scalene Yang--Baxter relation.

\section{Connecting the representation in Sec. \ref{subsec:qdifference} to $q$-difference calculus}
\label{app:q-calculus}
We now explain the relation of this representation to $q$-difference
calculus.  In ordinary finite-difference calculus, one considers
additive translations of the argument,
\begin{equation}
f(x)\longmapsto f(x+a).
\end{equation}
In $q$-difference calculus the corresponding fundamental operation is
instead the multiplicative translation
\begin{equation}
f(x)\longmapsto f(qx).
\end{equation}
The operator implementing this transformation is precisely
$T_q$,
\begin{equation}
(T_qf)(x)=f(qx).
\end{equation}
For example, the standard $q$-difference derivative may be written as
\begin{equation}
(D_qf)(x)
=
\frac{f(qx)-f(x)}{(q-1)x},
\label{eq:qderivative}
\end{equation}
or, in operator notation,
\begin{equation}
D_q
=
\frac{1}{q-1}\,
M^{-1}(T_q-\mathbb I).
\label{eq:qderivativeoperator}
\end{equation}
In the limit $q\rightarrow1$, this reduces formally to the ordinary
derivative,
\begin{equation}
D_qf(x)
\longrightarrow
\frac{df(x)}{dx}.
\end{equation}
Thus the operators $G_\lambda$ and $H_\lambda$ are not themselves
$q$-difference derivatives; rather, they are constructed from the
multiplicative-shift operators that form the basic building blocks of
$q$-difference calculus.  In this sense the present construction gives
a $q$-difference realization of the quantum affine algebras.

\bibliographystyle{ieeetr}
\bibliography{refs}

@book{Baxter1982,
  author    = {Baxter, Rodney J.},
  title     = {Exactly Solved Models in Statistical Mechanics},
  publisher = {Academic Press},
  address   = {London},
  year      = {1982}
}

@book{KorepinBogoliubovIzergin1993,
  author    = {Korepin, Vladimir E. and Bogoliubov, Nikolay M. and Izergin, Anatoli G.},
  title     = {Quantum Inverse Scattering Method and Correlation Functions},
  publisher = {Cambridge University Press},
  address   = {Cambridge},
  year      = {1993},
  doi       = {10.1017/CBO9780511628832}
}

@article{Faddeev1996,
  author  = {Faddeev, Ludwig D.},
  title   = {How Algebraic Bethe Ansatz works for integrable models},
  journal = {Les Houches Lectures},
  year    = {1996},
  eprint  = {hep-th/9605187},
  archivePrefix = {arXiv}
}

@article{HietarintaViallet2022,
  author  = {Hietarinta, Jarmo and Viallet, Claude},
  title   = {On the parametrization of solutions of the {Yang--Baxter} equations},
  journal = {Open Communications in Nonlinear Mathematical Physics},
  volume  = {2},
  year    = {2022},
  doi     = {10.46298/ocnmp.10204},
  eprint  = {q-alg/9504028},
  archivePrefix = {arXiv}
}

@article{PadmanabhanKorepin2024Clifford,
  author  = {Padmanabhan, Pramod and Korepin, Vladimir},
  title   = {Solving the {Yang--Baxter}, tetrahedron and higher simplex equations using Clifford algebras},
  journal = {Nuclear Physics B},
  volume  = {1007},
  pages   = {116664},
  year    = {2024},
  doi     = {10.1016/j.nuclphysb.2024.116664},
  eprint  = {2404.11501},
  archivePrefix = {arXiv},
  primaryClass = {hep-th}
}

@article{PadmanabhanSinghKorepin2025CliffordSolver,
  author  = {Padmanabhan, Pramod and Singh, Vivek Kumar and Korepin, Vladimir E.},
  title   = {Clifford Solver for the Tetrahedron Equation and Its Variants},
  journal = {Bulgarian Journal of Physics},
  volume  = {52},
  number  = {s1},
  pages   = {126--131},
  year    = {2025},
  doi     = {10.55318/bgjp.2025.52.s1.126},
  eprint  = {2510.23944},
  archivePrefix = {arXiv},
  primaryClass = {hep-th},
  note    = {Proceedings of the XIII International Symposium on Quantum Theory and Symmetries (QTS-13), Yerevan, Armenia, 28 July--1 August 2025}
}

@article{MaitySinghPadmanabhanKorepin2024,
  author  = {Maity, Somnath and Singh, Vivek Kumar and Padmanabhan, Pramod and Korepin, Vladimir},
  title   = {Algebraic classification of Hietarinta's solutions of {Yang--Baxter} equations: invertible $4\times4$ operators},
  journal = {Journal of High Energy Physics},
  volume  = {2024},
  number  = {12},
  pages   = {067},
  year    = {2024},
  doi     = {10.1007/JHEP12(2024)067},
  eprint  = {2409.05375},
  archivePrefix = {arXiv},
  primaryClass = {hep-th}
}

@article{PadmanabhanMaityKorepin2026,
  author  = {Padmanabhan, Pramod and Maity, Somnath and Korepin, Vladimir},
  title   = {Scalene {Yang--Baxter} triples as a source of hidden symmetries beyond the ordinary {Yang--Baxter} equation},
  year    = {2026},
  eprint  = {2608.09081},
  archivePrefix = {arXiv},
  primaryClass = {hep-th}
}

@incollection{FRT1989,
  author    = {Faddeev, L. D. and Reshetikhin, N. Yu. and Takhtajan, L. A.},
  title     = {Quantization of Lie groups and Lie algebras},
  booktitle = {Algebraic Analysis, Vol. I},
  editor    = {Kashiwara, M. and Kawai, T.},
  publisher = {Academic Press},
  address   = {Boston},
  pages     = {129--139},
  year      = {1988}
}

@article{Manin1987Koszul,
  author  = {Manin, Yuri I.},
  title   = {Some Remarks on Koszul Algebras and Quantum Groups},
  journal = {Annales de l'Institut Fourier},
  volume  = {37},
  number  = {4},
  pages   = {191--205},
  year    = {1987},
  doi     = {10.5802/aif.1117}
}

@book{Manin1988,
  author    = {Manin, Yuri I.},
  title     = {Quantum Groups and Noncommutative Geometry},
  series    = {Centre de Recherches Math\'ematiques Lecture Notes},
  publisher = {Universit\'e de Montr\'eal},
  address   = {Montr\'eal},
  year      = {1988}
}

@book{Manin2018QuantumGroups,
  author    = {Manin, Yuri I.},
  title     = {Quantum Groups and Noncommutative Geometry},
  series    = {CRM Short Courses},
  edition   = {2},
  publisher = {Springer},
  address   = {Cham},
  year      = {2018},
  doi       = {10.1007/978-3-319-97987-8},
  isbn      = {978-3-319-97987-8}
}

@article{Manin1989Multiparametric,
  author  = {Manin, Yuri I.},
  title   = {Multiparametric Quantum Deformation of the General Linear Supergroup},
  journal = {Communications in Mathematical Physics},
  volume  = {123},
  number  = {1},
  pages   = {163--175},
  year    = {1989},
  doi     = {10.1007/BF01244022}
}

@article{GoodearlLetzter2008,
  author  = {Goodearl, K. R. and Letzter, E. S.},
  title   = {Semiclassical limits of quantum affine spaces},
  journal = {Proceedings of the Edinburgh Mathematical Society},
  volume  = {52},
  number  = {2},
  pages   = {387--407},
  year    = {2009},
  eprint  = {0708.1091},
  archivePrefix = {arXiv},
  primaryClass = {math.QA}
}

@article{MukherjeeBera2024,
  author  = {Mukherjee, Snehashis and Bera, Sanu},
  title   = {Construction of Simple Modules over the Quantum Affine Space},
  journal = {Algebra Colloquium},
  volume  = {31},
  number  = {1},
  pages   = {1--10},
  year    = {2024},
  doi     = {10.1142/S1005386724000026},
  eprint  = {2001.07432},
  archivePrefix = {arXiv},
  primaryClass = {math.RT}
}

@article{Schmudgen2002,
  author  = {Schm\"udgen, Konrad},
  title   = {On the Quantum Quarter Plane and the Real Quantum Plane},
  journal = {International Journal of Mathematics},
  volume  = {13},
  number  = {3},
  pages   = {279--321},
  year    = {2002},
  eprint  = {math/0005225},
  archivePrefix = {arXiv}
}

@article{Hlavaty1997,
  author  = {Hlavat\'y, Ladislav},
  title   = {Yang--Baxter systems, solutions and applications},
  year    = {1997},
  eprint  = {q-alg/9711027},
  archivePrefix = {arXiv}
}

@article{BrzezinskiNichita2005,
  author  = {Brzezi\'nski, Tomasz and Nichita, Florin F.},
  title   = {Yang--Baxter systems and entwining structures},
  journal = {Communications in Algebra},
  volume  = {33},
  number  = {4},
  pages   = {1083--1093},
  year    = {2005},
  eprint  = {math/0311171},
  archivePrefix = {arXiv},
  primaryClass = {math.QA}
}

@article{BerceanuNichitaPopescu2013,
  author  = {Berceanu, Barbu R. and Nichita, Florin F. and Popescu, C\u{a}lin},
  title   = {Algebra Structures Arising from Yang--Baxter Systems},
  journal = {Communications in Algebra},
  year    = {2013},
  doi     = {10.1080/00927872.2012.703736},
  eprint  = {1005.0989},
  archivePrefix = {arXiv},
  primaryClass = {math.QA}
}

@article{Isaev2022,
  author  = {Isaev, A. P.},
  title   = {Lectures on quantum groups and Yang--Baxter equations},
  year    = {2022},
  eprint  = {2206.08902},
  archivePrefix = {arXiv},
  primaryClass = {math-ph}
}

@article{OhnukiKitakado1993,
  author  = {Ohnuki, Y. and Kitakado, S.},
  title   = {Fundamental Algebra for Quantum Mechanics on $S^D$ and Gauge Potentials},
  journal = {Journal of Mathematical Physics},
  volume  = {34},
  number  = {7},
  pages   = {2827--2851},
  year    = {1993},
  doi     = {10.1063/1.530099}
}

@article{CarruthersNieto1968,
  author  = {Carruthers, P. and Nieto, Michael Martin},
  title   = {Phase and Angle Variables in Quantum Mechanics},
  journal = {Reviews of Modern Physics},
  volume  = {40},
  number  = {2},
  pages   = {411--440},
  year    = {1968},
  doi     = {10.1103/RevModPhys.40.411}
}

@book{Vaidyanathan2023FunctionalAnalysis,
  author    = {Vaidyanathan, Prahlad},
  title     = {Functional Analysis},
  publisher = {Cambridge University Press},
  year      = {2023},
  isbn      = {9781009243902},
  doi       = {10.1017/9781009243926}
}

@article{Bergman1978Diamond,
  author  = {Bergman, George M.},
  title   = {The Diamond Lemma for Ring Theory},
  journal = {Advances in Mathematics},
  volume  = {29},
  number  = {2},
  pages   = {178--218},
  year    = {1978},
  doi     = {10.1016/0001-8708(78)90010-5}
}

@article{Oh2008QuantumPoisson,
  author  = {Oh, Sei-Qwon},
  title   = {Quantum and Poisson Structures of Multi-Parameter
             Symplectic and Euclidean Spaces},
  journal = {Journal of Algebra},
  volume  = {319},
  number  = {11},
  pages   = {4485--4535},
  year    = {2008}
}

@article{konstantinou2026scalene,
  title={Scalene Yang-Baxter maps and Lax triples},
  author={Konstantinou-Rizos, S and Kouloukas, T},
  journal={arXiv:2604.27060 [nlin.SI]},
  year={2026}
}

\end{document}